
\documentclass{IEEEtran}
\linespread{1}
\usepackage[latin9]{inputenc}
\usepackage{float}
\usepackage{color}
\usepackage{amsmath}
\usepackage{amsthm}
\usepackage{amssymb}
\usepackage{graphicx}

\makeatletter

\floatstyle{ruled}
\newfloat{algorithm}{tbp}{loa}
\providecommand{\algorithmname}{Algorithm}
\floatname{algorithm}{\protect\algorithmname}

 
  \theoremstyle{definition}
  \newtheorem{defn}{\protect\definitionname}
  \theoremstyle{plain}
  \newtheorem{thm}{\protect\theoremname}
  \theoremstyle{plain}
  \newtheorem{corollary}{Corollary}
  \theoremstyle{plain}
  \newtheorem{lem}{\protect\lemmaname}
  \theoremstyle{definition}
  
  \theoremstyle{remark}
  \newtheorem{rem}{\protect\remarkname}
  \theoremstyle{plain}
  \newtheorem{prop}{\protect\propositionname}
 
  \theoremstyle{assumption}

\@ifundefined{date}{}{\date{}}
\usepackage{algorithm}
\usepackage{algorithmic}


\providecommand{\definitionname}{Definition}
\providecommand{\lemmaname}{Lemma}
\providecommand{\problemname}{Problem}
\providecommand{\propositionname}{Proposition}
\providecommand{\remarkname}{Remark}
\providecommand{\theoremname}{Theorem}

\providecommand{\definitionname}{Definition}
\providecommand{\lemmaname}{Lemma}
\providecommand{\problemname}{Problem}
\providecommand{\propositionname}{Proposition}
\providecommand{\remarkname}{Remark}
\providecommand{\theoremname}{Theorem}

\makeatother

\providecommand{\definitionname}{Definition}
\providecommand{\lemmaname}{Lemma}
\providecommand{\problemname}{Problem}
\providecommand{\propositionname}{Proposition}
\providecommand{\remarkname}{Remark}
\providecommand{\theoremname}{Theorem}

\begin{document}

\title{The Vulnerability of Cyber-Physical System under Stealthy Attacks}
\author{Tianju Sui$^{1}$, Yilin Mo$^{2,\dagger}$ Damián Marelli$^{3}$, Ximing Sun$^{1}$ and Minyue Fu$^{4}$,
\emph{Fellow IEEE}\thanks{$^{1}$Tianju Sui and Ximing Sun are with the the School of Control Science and Engineering,
Dalian University of Technology, Dalian, China. Email:{\footnotesize{}
}\texttt{\footnotesize{}suitj@dlut.edu.cn;~sunxm@dlut.edu.cn.}}\thanks{$^{2}$Yilin Mo is with the Department of Automation, Tsinghua University, Beijing, China. Email:{\footnotesize{}
}\texttt{\footnotesize{}ylmo@tsinghua.edu.cn.}}\thanks{$^{3}$Damián Marelli is with the School of Automation, Guangdong University
of Technology, Guangzhou, China, and with the French Argentine International
Center for Information and Systems Sciences, National Scientific and
Technical Research Council, Argentina. Email: \texttt{\footnotesize{}Damian.Marelli@newcastle.edu.au}{\footnotesize{}.}}\thanks{$^{4}$Minyue Fu is with the School of Electrical Engineering and
Computer Science, University of Newcastle, Callaghan, NSW 2308, Australia.
He also holds an Qianren Professorship at the School of Automation,
Guandong University of Technology, China. Email:{\footnotesize{} }\texttt{\footnotesize{}Minyue.Fu@newcastle.edu.au.}}\thanks{$\dagger$Corresponding author.}}
\maketitle
\begin{abstract}
In this paper, we study the impact of stealthy attacks on the Cyber-Physical System (CPS) modeled as a stochastic linear system. An attack is characterised by a malicious injection into the system through input, output or both, and it is called stealthy (resp.~strictly stealthy) if it produces bounded changes (resp.~no changes) in the detection residue. Correspondingly, a CPS is called vulnerable (resp.~strictly vulnerable) if it can be destabilized by a stealthy attack (resp.~strictly stealthy attack). We provide necessary and sufficient conditions for the vulnerability and strictly vulnerability. For the invulnerable case, we also provide a performance bound for the difference between healthy and attacked system. Numerical examples are provided to illustrate the theoretical results.
%
\end{abstract}


\section{Introduction\label{sec:intro}}
Cyber-Physical Systems (CPSs), such as sensor networks, smart grids and transportation systems, are widely used in applications. Such a system combines a physical system with network technology to greatly improve the efficiency of the system. However, at the same time, this combination increases the vulnerability of the system. In particular, CPSs are subject to possible cyber attacks. At the physical system level, such attacks are characterised by malicious injections into the system through input, output or both.

The Stunex attack is one of the most famous CPS attack till now~\cite{Farwell2011}. In June 2010, a targeted virus was injected into the Bushehr nuclear power plant through a USB flash disk. The virus replaced the measurement data from the centrifuges by a sequence of ``normal" data to mislead the fault detection system to trust that the system was operating normally. Then, the virus injected input signals to accelerate the centrifuges to self destruction. This incident was reported to have caused a series of disastrous effects and destroyed over 3000 centrifuges~\cite{Farwell2011}. The attacks like Stunex may penetrate the traditional information protection framework(such as FDI) of CPS.

Other examples of CPS attacks include: the Maroochy water breach~\cite{Slay2007}, the blackout in brazil power grid~\cite{Conti2010}, the SQL Slammer attack in Davis-Besse nuclear power plant~\cite{Kuvshinkova2003}, and many other industry security incidents~\cite{Richards2008}. According to the statistical data from ICS-CERT (see https://ics-cert.us-cert.gov), there were 245 CPS attacks confirmed in 2014 and the number increased to 295 in 2015.

CPS security has attracted many researchers to focus on this area~\cite{Sastry2008chanllenge}. The traditional efforts, such as robust statistics~\cite{Huber2011} and robust control~\cite{Zhou1996}, are designed to withstand certain types of failures. The popular Fault Detection and Isolation (FDI) method assumes that the failure is spontaneous~\cite{Willsky1976,Massou1989FDI,Hwang2010FDI}. However, CPS attacks are usually purposely designed to be stealthy and destructive, and are often done with the full or partial knowledge of the system dynamic model. Thus, it is insufficient to rely on robust control or FDI against CPS attacks. As shown in \cite{Liu2011attack}, an attacker can take advantage of the configuration of a power system to launch such attacks to successfully bypass the existing techniques for bad measurement detection.

For a CPS with a linear dynamic model for the physical system, many studies have been done in the detection and analysis of malicious attacks. The work of~\cite{Pasqualetti2010} studied the performance of an average consensus algorithm when individual agents in a networked system are under attack. In~\cite{Pasqualetti2013}, the authors studied the detectability of attacks and pointed out that, for a noiseless system, the only undetectable space for attacks is due to the unknown initial state. In \cite{Sundaram2010}, an algorithm was offered to detect attacked sensors in a multi-sensors network. The work of \cite{Fawzi2012} analyzed the performance of an attacked system and studied the stabilization problem using state feedback.

The above works all assumed that the physical system is noiseless, which is very restrictive. System noises would give a shelter for attacks because they may be mistaken for noises. For static systems subject to noises, \cite{Wagner2004} utilized a general evaluation standard to study the robustness of the network cluster mechanism against attacks;  In ~\cite{Liu2011false}, the authors considered the estimation problem in a smart grid and studied how does an undetectable attack change the state of the system.

For dynamic systems subject to noises, \cite{Mo2010false} studied the performance of Kalman Filter under attacks. They further studied the attack strategy and calculated the miss/false alarm rates of a $\chi^2$ attack detector \cite{mo2016performance}. In \cite{mo2015secure}, they also worked on the design of robust estimators against attacks for multi-sensors systems. Besides, \cite{Jin2017SenAcu} develop an adaptive controller that guarantees uniform ultimate boundedness of the closed-loop dynamical system in the face of adversarial sensor and actuator attacks. The works in \cite{Jin2018Distur,Jin2018multi} extend the results to the cyber-physical systems subject to exogenous disturbances and leader-follower multiagent systems, respectively.

A lot of studies have also been done on special types of attacks. Zhang {\em et.~al.} focused on the energy-constrained attack scheduling problem for Denial-of-Service (DoS) attacks~\cite{ZhangH2015}. Zhao {\em et.~al.} studied the effect of stealthy attacks on consensus-based distributed economic dispatch~\cite{ZhaoC2017}. Kung {\em et.~al.} defined an $\epsilon$-stealthy attack  and analyzed its effect for scalar systems~\cite{Kung2017}. In \cite{DingK2017}, the authors worked on the multi-channel transmission schedule problem for remote state estimation under DoS attacks.

In this paper, we focus on a stochastic linear system under both sensor and actuator attacks. Firstly, we consider stealthy attacks or strictly stealthy attacks whose corresponding effect on the detection residue is either bounded or zero. It is noted that a stealthy attack is practically difficult to detect and a strictly stealthy attack is theoretically impossible to detect. We then study system's vulnerability under such attacks.  A system is said to be vulnerable if it can be destabilized by a stealthy attack, or strictly vulnerable if it can be destabilized by a strictly stealthy attack. We give the necessary and sufficient conditions for both vulnerable systems and strictly vulnerable systems. To further study the performance of invulnerable system under stealthy attacks, we give a performance bound for the difference between healthy system and attacked system. These results will help to understand what kind of systems are robust to stealthy attacks and how to reduce their impact on the performance.

Focusing on a standard stochastic linear system equipped with state feedback controller and Romberg state observer, the contributions of this paper are mainly in two-folds: 1) The necessary and sufficient strict vulnerability/vulnerability conditions are given. The designers of Cyber-physical systems can check the robustness of system under stealthy attacks and understand what sensor/actuator channels are critical to the vulnerability; 2) A universal upper bound for the performance is given. The designers of Cyber-physical systems can evaluate the damage caused by attacks.

The rest of this paper is organized as follows. In Section~\ref{sec:problem}, we describe the models of CPS and attacks under our study. In Section~\ref{subsec:detect}, we introduce the definitions of stealthy and strictly stealthy attacks. The notions of vulnerability and strictly vulnerability are defined according to the destabilizability of stealthy and strictly stealthy attacks. The necessary and sufficient conditions for strict vulnerability and vulnerability are given in Sections~\ref{subsec:sv} and \ref{subsec:v}, respectively. The invulnerable system's performance bound for the difference between healthy and attacked systems is given in Section~\ref{sec:bound}. In Section~\ref{sec:simulate}, examples are given to illustrate the theoretical results. Concluding remarks are stated in Section~\ref{sec:conclude}. Some proofs are left in the Appendix.

\section{Problem Formulation\label{sec:problem}}

\subsection{System Model}
In this paper, the Cyber-physical system is modeled as a linear discrete-time stochastic system in state-space form
\begin{align}
x_{t+1} & =Ax_{t}+Bu_{t}+w_{t},\label{eq:ss1}\\
y_{t} & =Cx_{t}+v_{t},\label{eq:ss2}
\end{align}
where the state $x_t\in \mathbb{R}^n$, the measurement $y_t\in\mathbb R^m$ and the control input $u_t\in\mathbb R^p$. The process noise $w_{t} \in \mathbb{R}^n$ and the measurement noise $v_t\in \mathbb{R}^m$ obey some zero-mean stochastic distributions. Moreover, $A\in \mathbb{R}^{n\times n}$ is the system matrix, $B\in \mathbb{R}^{n\times p}$ is the actuator matrix and $C\in \mathbb{R}^{m\times n}$ is the measurement matrix. In the rest of paper, it is assumed that $(A,C)$ is observable and $(A,B)$ is controllable.

Furthermore, the control input $u_t$ is assumed to be generated by a steady-state controller. To be specific, a steady-state controller is given by
\begin{eqnarray}
u_t = L\hat{x}_t,\label{lqg}
\end{eqnarray}
where $\hat{x}_t$ is generated by the estimator in \eqref{kf} below and $L\in \mathbb{R}^{p\times n}$ is chosen such that $A+BL$ is stable.

We assume a linear time-invariant Luenberger estimator is being deployed, which has the following form:
\begin{eqnarray}
\hat{x}_{t+1} = A\hat{x}_t+Bu_t+K[y_{t+1}-C(A\hat{x}_t+Bu_t)],\label{kf}
\end{eqnarray}
where $K$ is chosen such that $A-KCA$ is stable.

\begin{rem}
Other than the constraint that both $A+BL$ and $A-KCA$ are stable, the choices of $L$ and $K$ are arbitrary.
\end{rem}

We define the innovation signal $z_t$ as
\begin{eqnarray}
z_{t+1} = y_{t+1}-C(A\hat{x}_t+Bu_t), \label{eq:z_k}
\end{eqnarray}
and the estimation error $e_t$ as
$$e_{t} = x_t - \hat{x}_t.$$

Combining \eqref{eq:ss1} and \eqref{kf}, one can prove that $e_t$ follows the following recursive equation:
\begin{eqnarray}
e_{t+1} = (A-KCA)e_t+(I-KC)w_t-Kv_{t+1}.
\end{eqnarray}

\subsection{Attack Model}

In this paper, we assume that the adversary can inject an external control input and manipulate a subset of the sensory data. Therefore, system under the attack can be described by the following equations:
\begin{eqnarray}
x_{t+1}' &=& Ax_{t}'+Bu_{t}'+B^au^a_t+w_{t},\label{eq:at1}\\
y_{t}' &=& Cx_{t}'+\Gamma^ay^a_t+v_{t},\label{eq:at2}
\end{eqnarray}
where we use $(\cdot)'$ to denote the variable $\cdot$ under attack, $u^a_t\in \mathbb{R}^{p_a}$ is the actuator attack signal, $y^a_t\in \mathbb{R}^{m_a}$ is the sensor attack signal\footnote{In this paper, we do not put any constraint on $u^a_t$ and $y^a_t$ except that they need to satisfy stealthy or strictly stealthy requirement, which is introduced later in Section~\ref{subsec:detect}.}, $B^a\in \mathbb{R}^{n\times p_a}$ is the actuator attack matrix and $\Gamma^a\in \mathbb{R}^{m\times m_a} = \begin{bmatrix}
e_{i_1}&\ldots&e_{i_{m_a}}
\end{bmatrix}$ is the sensor attack matrix, where $e_i$ are the $i$th canonical basis vector of $\mathbb R^m$, and $\{i_1,\ldots,i_{m_a}\}$ is the set of the compromised sensors. Moreover, the attack is assumed to start at time $1$.

Without loss of generality, we assume that both $B^a$ and $\Gamma^a$ are full column rank.\footnote{If $B^a$ is not full column rank, then certain column of $B^a$ can be represented by a linear combination of other columns, i.e., the effect of certain malicious actuator on the system can be duplicated by the combined effect of several other malicious actuators. Therefore, removing the redundant actuator and corresponding column in $B^a$ will not change the attacker's capability.} The dimension $p_a$ and $m_a$ of the attack signal $u^a_t$, $y^a_t$ represent the attacks' degrees of freedom.

In order to consider the worst-case scenario for the CPS, the attacker is assumed to know the full system model (\ref{eq:ss1})-(\ref{eq:ss2}).

In the presence of the adversary, the steady-state estimator and controller are given by
\begin{eqnarray}
\hat{x}_{t+1}' &=& A\hat{x}_{t}'+Bu_{t}'+K[y_{t+1}' - C(A\hat{x}_t'+Bu_t')],\label{eq:akf1}\\
u_t' &=& L\hat{x}_t'.\nonumber
\end{eqnarray}


The innovation signal and estimation error are updated as
\begin{eqnarray}
z_{t+1}' &=& y_{t+1}' - C(A\hat{x}_t'+Bu_t'),\label{eq:z_kat}\\
e_t' &=& x_t'-\hat{x}_t'.
\end{eqnarray}

The difference between an attacked system and the healthy system is characterized by
\begin{align}
  \triangle x_t &\triangleq x_t' - x_t,&
                     \triangle\hat{x}_t &\triangleq \hat{x}_t' - \hat{x}_t,\nonumber\\
  \triangle u_t &\triangleq u_t' - u_t,&
                             \triangle y_t &\triangleq y_t' - y_t,\nonumber\\
  \triangle z_t &\triangleq z_t' - z_t,&
                             \triangle e_t &\triangleq e_t' - e_t.\label{eq:tri}
\end{align}
The difference variables are of particular interest for the adversary and will be the focus in the rest of the paper. To be specific, $\Delta z_t$ and $\Delta y_t$ can be used to characterize the stealthiness of the attack. An intrusion detector employed by the CPS is unable (or hardly able) to distinguish a healthy system and a compromised system if $\triangle z_t$ and $\triangle y_t$ are zero (or small enough). The quantities $\triangle x_t$ and $\triangle e_t$ can be used to quantify the damage caused by the attack.
\begin{rem}
Since we assume that the attacks start at time 1, the biases between healthy and attacked system are all zeroes at time 0, i.e., $\triangle e_0=0, \triangle x_0=0$ and $\triangle z_0=0$.
\end{rem}

\section{Classifications for systems and attacks}\label{subsec:detect}
In this section, we shall classify the attacks depending on the stealthiness of the attack. Since the input of any detector is the sensory data $\{y_t':t\in \mathbb{N}\}$, and there is a one-to-one mapping between the residual error sequence $\{z_t':t\in \mathbb{N}\}$ and the sensory data, we analyze the difference between the attacked system's $z_t'$ and the healthy system's $z_t$, i.e., $\triangle z_t$, to determine if an attack can be detected or not. An attack is impossible to be detected if
\begin{eqnarray}\label{eq:detect1}
\|\triangle z_t\|= 0, \forall t\in\mathbb{N}.
\end{eqnarray}

In practice, an attack is hardly detectable if $\triangle z_t$ is small enough, i.e., there exist $\delta>0$ such that
\begin{eqnarray}\label{eq:detect2}
\|\triangle z_t\|\leq \delta, \forall t\in\mathbb{N}.
\end{eqnarray}
\begin{rem}
As proved by Theorem~1 in \cite{mo2016performance}, for a linear Gaussian system monitored by a $\chi^2$ detector, the alarm rate converges to the false alarm rate as $\delta\rightarrow 0$. Moreover, Bai et al.~\cite{bai2015cdc} have proven a similar result for other forms of detectors.
\end{rem}
Based on the above, the classification of attacks is given below.
\begin{defn}\label{defn:stealthy}
An attack sequence is said to be {\em stealthy} if \eqref{eq:detect2} is satisfied for some $\delta>0$ and {\em strictly stealthy} if \eqref{eq:detect1} holds.
\end{defn}
By substracting \eqref{eq:z_kat} from \eqref{eq:z_k} and \eqref{eq:akf1} from \eqref{kf}, we have
\begin{eqnarray}
\triangle \hat{x}_{t+1} &=& (A+BL)\triangle \hat{x}_{t} + K\triangle z_{t+1},\label{eq:delhatx}\\
\triangle y_{t+1} &=& \triangle z_{t+1} + C(A+BL)\triangle \hat{x}_{t}\label{eq:dely}.
\end{eqnarray}

Based on the definitions of $\triangle z_t$ and $\triangle e_t$, their update equations are given by
\begin{eqnarray}
&&\triangle e_{t+1} \nonumber\\
&=& (I-KC)A\triangle e_{t} + (I-KC)B^au_t^a - K\Gamma^ay_{t+1}^a,\ \ \ \  \label{eq:deltE}
\end{eqnarray}
and
\begin{eqnarray}
\triangle z_{t+1} &=& \triangle y_{t+1}-C(A+BL)\triangle \hat{x}_{t}\nonumber\\
&=& CA\triangle e_{t}+CB^au_t^a+\Gamma^ay_{t+1}^a.\label{eq:deltzk}
\end{eqnarray}

Noted that both $\triangle z_{t}$ and $\triangle e_{t}$ depend only on the attack signals.

A system is resilient under the attacks if both $\limsup_{t\rightarrow \infty}\|\triangle x_t\|<\infty$ and $\limsup_{t\rightarrow \infty}\|\triangle e_t\|<\infty$. The following lemma shows that we only need to check one of the conditions instead of both.
\begin{lem}\label{lem:equi}
For a stealthy or strictly stealthy attack on the system~\eqref{eq:ss1}-\eqref{eq:ss2}, a necessary and sufficient condition for $\limsup_{t\rightarrow \infty}\|\triangle x_t\|<\infty$ is
\begin{eqnarray}
\limsup_{t\rightarrow \infty}\|\triangle e_t\|<\infty.
\end{eqnarray}
\end{lem}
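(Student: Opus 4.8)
The plan is to exploit the purely definitional identity relating the three difference variables, together with the design assumption that the closed-loop map $A+BL$ is stable. The starting point is that, directly from \eqref{eq:tri}, $\triangle e_t = (x_t'-\hat x_t') - (x_t-\hat x_t) = \triangle x_t - \triangle \hat x_t$, so that
\[
\triangle x_t = \triangle e_t + \triangle \hat x_t .
\]
Consequently the claimed equivalence will follow immediately from the triangle inequality, \emph{provided} I can show that the extra term $\triangle \hat x_t$ is uniformly bounded in $t$ for every stealthy (and a fortiori strictly stealthy) attack. So the entire content of the lemma is reduced to controlling $\triangle \hat x_t$.

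To establish that boundedness I would use the recursion \eqref{eq:delhatx}, namely $\triangle \hat x_{t+1} = (A+BL)\triangle \hat x_t + K\triangle z_{t+1}$, together with the zero initial condition $\triangle \hat x_0 = \triangle x_0 - \triangle e_0 = 0$. Unrolling this stable linear recursion gives
\[
\triangle \hat x_t = \sum_{k=0}^{t-1} (A+BL)^{t-1-k} K \triangle z_{k+1}.
\]
Because $L$ is chosen so that $A+BL$ is stable, the series $\sum_{j=0}^{\infty}\|(A+BL)^j\|$ converges; and because the attack is stealthy, Definition~\ref{defn:stealthy} supplies a bound $\|\triangle z_t\|\le \delta$ valid for all $t$ (with $\delta = 0$ in the strictly stealthy case, which forces $\triangle \hat x_t \equiv 0$). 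Combining these two facts yields $\|\triangle \hat x_t\| \le \delta\,\|K\|\sum_{j=0}^{\infty}\|(A+BL)^j\|$, a finite constant $M$ independent of $t$.

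With the uniform bound $\|\triangle \hat x_t\| \le M$ in hand, both implications are one line each. If $\limsup_{t\to\infty}\|\triangle e_t\|<\infty$, then $\limsup_{t\to\infty}\|\triangle x_t\| \le \limsup_{t\to\infty}\|\triangle e_t\| + M < \infty$; conversely, rewriting $\triangle e_t = \triangle x_t - \triangle \hat x_t$ gives $\limsup_{t\to\infty}\|\triangle e_t\| \le \limsup_{t\to\infty}\|\triangle x_t\| + M < \infty$. I do not expect a genuine obstacle here: the only substantive step is the boundedness of $\triangle \hat x_t$, and even that is routine once one recognizes that stealthiness is precisely what feeds a \emph{bounded} exogenous signal $K\triangle z_t$ into the \emph{stable} filter governed by $A+BL$. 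The crux of the argument is therefore simply to record the decomposition $\triangle x_t = \triangle e_t + \triangle \hat x_t$ and to invoke the standing assumption that $A+BL$ is stable.
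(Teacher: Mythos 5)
Your proposal is correct and follows essentially the same route as the paper's proof: the decomposition $\triangle x_t = \triangle \hat{x}_t + \triangle e_t$, the recursion \eqref{eq:delhatx}, and stability of $A+BL$ combined with stealthiness to bound $\triangle \hat{x}_t$. You merely spell out the convolution-sum bound that the paper leaves implicit, so there is nothing to add.
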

\begin{proof}
Based on the notations in \eqref{eq:tri}, we have
\begin{eqnarray}
\triangle x_t &=& (\hat{x}_t'+e_t')-(\hat{x}_t+e_t)= \triangle \hat{x}_t + \triangle e_t.\label{eq:equivalent}
\end{eqnarray}

Recall the equation \eqref{eq:delhatx}, we have
\begin{eqnarray*}
\triangle \hat{x}_{t+1} = (A+BL)\triangle \hat{x}_t + K\triangle z_{t+1}.
\end{eqnarray*}

Since $A+BL$ is stable and $\Delta z_k$ is bounded due to the stealthy (or strictly stealthy) requirement in \eqref{eq:detect1} and \eqref{eq:detect2}, the variable $\triangle \hat{x}_{t}$ is bounded for any $t\in \mathbb{N}$.
Therefore, the condition $\limsup_{t\rightarrow \infty}\|\triangle e_t\|<\infty$ is equivalent to  $\limsup_{t\rightarrow \infty}\|\triangle x_t\|<\infty$.
\end{proof}
In the rest of paper, we will use the boundness of $\limsup_{t\rightarrow \infty}\|\triangle e_t\|$ to represent the resilience under attacks.
Combining with the classification of attacks in Definition~\ref{defn:stealthy}, we can classify a system \eqref{eq:ss1}-\eqref{eq:ss2} depending on if there exists a stealthy (or strictly stealthy) attack to introduce an unbounded estimation error $\Delta e_t$ (or bias on the state $\Delta x_t$).
\begin{defn}\label{defn:vulner}
The linear system in \eqref{eq:ss1}-\eqref{eq:ss2} is said to be {\em vulnerable} (or {\em strictly vulnerable}) if, for any $M_1>0$, there exists a {\em stealthy} (or {\em strictly stealthy}) attack such that
\begin{eqnarray}
\limsup_{t\rightarrow \infty}\|\triangle e_t\|> M_1.\label{eq:vulner1}
\end{eqnarray}

And the system is {\em invulnerable} (or {\em strictly invulnerable}) if there exists $M_2>0$ such that
\begin{eqnarray}
\limsup_{t\rightarrow \infty}\|\triangle e_t\|\leq M_2\label{eq:invulner1}
\end{eqnarray}
for any {\em stealthy} (or {\em strictly stealthy}) attacks.
\end{defn}
\begin{rem}
The vulnerability and strict vulnerability of a system are important concepts for system security. That a system is strictly invulnerable means that it is always stable under any attacks that have no influence on the residue. Meanwhile, that a system is invulnerable means that it is always stable under any attacks that have bounded influence on the residue. Thus, the invulnerable system is more robust to the attacks. We will provide necessary and sufficient conditions for vulnerability and strict vulnerability in the following Sections~\ref{subsec:sv} and~\ref{subsec:v}, respectively.
\end{rem}
\section{The Necessary and Sufficient Condition for Strict Vulnerability}\label{subsec:sv}
This section is devoted to the characterization of strictly vulnerable systems.
\begin{defn}\label{def:invertible}
Consider the following linear system with initial state $x_0=0$:
\begin{eqnarray}
x_{k+1}=Ax_k+Bu_k,y_k=Cx_k+Du_k.\label{eq:standard}
\end{eqnarray}

The above system is said to be {\em invertible} if $y_k=0$ for all $k\in \mathbb{N}$ implies that $u_k=0$ for all $k\in \mathbb{N}$\footnote{The invertible system defined here is called {\em left invertible} in \cite{wonham1985,marro2010}.}.
\end{defn}
\begin{rem}
If a system is invertible, the mapping from the input $\{u_k:k\in \mathbb{N}\}$ to the output $\{y_k:k\in \mathbb{N}\}$ is injective, which means that different input will result in different output. In particular, any non-zero input will result in an non-zero output. 
\end{rem}

In particular, one can check the invertibility of a linear system using the following rank conditions, the proof of which can be found in \cite{Sain1969}.
\begin{prop}\label{prop:invertible}
The linear system in \eqref{eq:standard} is {\em invertible} if and only if
\begin{eqnarray}
\text{rank}(M_n)-\text{rank}(M_{n-1})=\dim(u_k),\label{eq:rankM}
\end{eqnarray}
where
\begin{eqnarray*}
M_i = \begin{bmatrix}D&0&0&\cdots&0\\ CB&D&0&\cdots&0\\ CAB&CB&D&\cdots&0\\ \vdots&\vdots&\vdots&\ddots&\vdots\\ CA^{i-1}B&CA^{i-2}B&CA^{i-3}B&\cdots&D\end{bmatrix}
\end{eqnarray*}
and $n$ is the dimension of state $x_k$.
\end{prop}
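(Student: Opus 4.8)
The plan is to recast left-invertibility as the absence of a nonzero input that is annihilated by the system, to translate this into a statement about the rank increments $\sigma_i\triangleq\text{rank}(M_i)-\text{rank}(M_{i-1})$ (with the conventions $\text{rank}(M_{-1})\triangleq0$ and $\mu\triangleq\dim(u_k)$), and then to show that these increments are non-decreasing and reach their limiting value within $n$ steps.

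First I would record the input--output map. Iterating \eqref{eq:standard} from $x_0=0$ gives $y_k=\sum_{j=0}^{k-1}CA^{k-1-j}Bu_j+Du_k$, so that stacking the first $i+1$ samples yields $(y_0^\top,\dots,y_i^\top)^\top=M_i(u_0^\top,\dots,u_i^\top)^\top$. Consequently the system is invertible if and only if the infinite block-Toeplitz operator assembled from the $M_i$ is injective, i.e.\ no nonzero input $\{u_k\}$ produces $y_k=0$ for all $k$. A shift argument then eliminates the infinite horizon: if such a nonzero sequence exists and $k_0$ is its first nonzero index, then $x_{k_0}=0$ because $x_0=0$ and the earlier inputs vanish, so by time invariance one may assume $u_0\neq0$. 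Hence non-invertibility is equivalent to the existence of a zero-output trajectory with $u_0\neq0$.

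Next I would introduce the nested subspaces $V_i\triangleq\{u_0:\exists(u_1,\dots,u_i)\text{ with }(u_0,\dots,u_i)\in\ker M_i\}\subseteq\mathbb{R}^{\mu}$, which satisfy $V_0\supseteq V_1\supseteq\cdots$, and establish the identity $\sigma_{i+1}=\mu-\dim V_{i+1}$. The mechanism is a shift embedding: prepending a zero input block defines an injection $S:\ker M_i\hookrightarrow\ker M_{i+1}$ whose image is exactly $\{w\in\ker M_{i+1}:w_0=0\}$ (here $x_0=0$ forces the first output to vanish and keeps $x_1=0$, so the tail is an unshifted zero-output trajectory). The first-block projection $w\mapsto w_0$ on $\ker M_{i+1}$ therefore has kernel $S(\ker M_i)$ and image $V_{i+1}$, giving $\dim\ker M_{i+1}-\dim\ker M_i=\dim V_{i+1}$; since $\dim\ker M_j=(j+1)\mu-\text{rank}(M_j)$, this is precisely $\mu-\sigma_{i+1}=\dim V_{i+1}$. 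In particular $\sigma_i\le\mu$, and because the $V_i$ are nested the increments $\sigma_i$ are non-decreasing.

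Finally I would prove stabilization, namely $\sigma_i=\sigma_n$ for every $i\ge n$, so that $\sigma_n$ already equals $r\triangleq\lim_i\sigma_i$, the normal rank of $G(z)=D+C(zI-A)^{-1}B$; invertibility is equivalent to $r=\mu$ by the first two steps, whence the claimed condition follows. The delicate point, and the one I expect to be the main obstacle, is that the nesting of $\{V_i\}$ alone bounds the stabilization horizon only by $\mu$, whereas the theorem requires $n$. To sharpen this I would pass to the state space and replace $\{V_i\}$ by the standard output-nulling controlled-invariant subspace iteration $\mathcal{V}_0=\mathbb{R}^{n}$, $\mathcal{V}_{k+1}=\{x:\exists u,\ Cx+Du=0\text{ and }Ax+Bu\in\mathcal{V}_k\}$, a decreasing chain in $\mathbb{R}^{n}$ that must stabilize in at most $n$ steps (equivalently, by Cayley--Hamilton the Markov parameters $CA^{k}B$ contribute no new independent directions once $k\ge n$). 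Identifying $V_n$ with the restriction of the stabilized subspace $\mathcal{V}^{*}$ simultaneously yields $\bigcap_iV_i=V_n$---so that controlled invariance lets any length-$(n+1)$ zero-output trajectory be continued indefinitely---and $\sigma_n=r$. Combining the steps, the system is invertible iff $\bigcap_iV_i=\{0\}$ iff $r=\mu$ iff $\text{rank}(M_n)-\text{rank}(M_{n-1})=\dim(u_k)$, which is \eqref{eq:rankM}. The crux is thus the upgrade from the trivial $\mu$-step bound to the $n$-step bound, together with the controlled-invariance continuation argument, which is exactly where the state dimension $n$ enters.
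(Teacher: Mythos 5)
Your proof is correct, but there is nothing in the paper to compare it against: the paper does not prove Proposition~\ref{prop:invertible} at all, it simply cites Sain and Massey~\cite{Sain1969}. Your argument is therefore a self-contained reconstruction of the classical result, and its skeleton is sound: the input--output identity $(y_0^\top,\dots,y_i^\top)^\top=M_i(u_0^\top,\dots,u_i^\top)^\top$, the shift-embedding/rank--nullity identity $\sigma_{i+1}=\mu-\dim V_{i+1}$, and the passage to the output-nulling controlled-invariant iteration $\mathcal{V}_{k+1}=\{x:\exists u,\ Cx+Du=0,\ Ax+Bu\in\mathcal{V}_k\}$, which stabilizes in at most $n$ steps and lets any zero-output trajectory of length $n+1$ be continued forever. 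The precise bridge you should state explicitly is $V_i=\{u_0:\ Du_0=0,\ Bu_0\in\mathcal{V}_i\}$, from which $V_i=V_n$ for all $i\geq n$, $\bigcap_i V_i=V_n$, and hence ``invertible $\Leftrightarrow V_n=\{0\}\Leftrightarrow\sigma_n=\mu$'' all follow. A pleasant by-product of your route is that it runs on exactly the geometric machinery (output-nulling invariant subspaces $V_m$, $V^*$ of Definition~\ref{def:invariant} and Lemma~\ref{lem:invariant}) that the paper itself deploys later for the vulnerability analysis, so your proof integrates better with the rest of the paper than the bare citation does.

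One side remark in your write-up is false, though it does not affect the proof: the nesting of $\{V_i\}$ alone does \emph{not} bound the stabilization horizon by $\mu$ --- it gives no bound at all, because a nested chain can stall and then drop again. For instance, a system with $D=0$, $CB=0$, $CAB\neq 0$ (e.g.\ the double integrator with position output) has $\sigma_0=\sigma_1=0$ but $\sigma_2=1$, i.e.\ $V_0=V_1\supsetneq V_2$. So the state-space iteration is not merely a sharpening from a $\mu$-step bound to an $n$-step bound; it is the only thing in your argument that yields a finite stabilization time, which is all the more reason to spell out the identity linking $V_i$ to $\mathcal{V}_i$ rather than leaving it as an ``identification.''
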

Furthermore, a complementary lemma is given to show the invertibility equivalence of two systems.
\begin{lem}\label{lem:eq}
The system in \eqref{eq:standard} is not invertible if and only if
\begin{eqnarray}
x_{k+1}'=(A+KC)x_k'+(B+KD)u_k',y'_k=Cx_k'+Du_k'.\label{eq:standardK}
\end{eqnarray}
is not invertible for any gain matrix $K$.

Furthermore, if $\{u_k\}$ is a non-zero sequence of input for system \eqref{eq:standard} such that $y_k = 0$ for all $k$, then $u_k' = u_k$ is a sequence of non-zero input for system \eqref{eq:standardK} such that $y_k' = 0$ for all $k$.
\end{lem}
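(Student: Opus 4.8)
The plan is to exploit the fact that the transformation from \eqref{eq:standard} to \eqref{eq:standardK} amounts to injecting the output back into the state dynamics. First I would rewrite the state update of \eqref{eq:standardK} by expanding the defining matrices:
\[ x_{k+1}' = (A+KC)x_k' + (B+KD)u_k' = Ax_k' + Bu_k' + K\left(Cx_k' + Du_k'\right) = Ax_k' + Bu_k' + Ky_k'. \]
The crucial observation is that whenever the output vanishes, i.e. $y_k' = 0$, this reduces to $x_{k+1}' = Ax_k' + Bu_k'$, which is exactly the state dynamics of \eqref{eq:standard}. Thus, along any trajectory producing zero output, the two systems are indistinguishable, and the whole lemma will follow from tracking this invariance.

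Next I would establish the ``Furthermore'' claim by induction. Suppose $\{u_k\}$ is a non-zero input for \eqref{eq:standard} with $x_0 = 0$ yielding $y_k = 0$ for all $k$, and let $\{x_k\}$ be the associated state trajectory. Feeding $u_k' = u_k$ into \eqref{eq:standardK} with $x_0' = 0$, I would show $x_k' = x_k$ and $y_k' = 0$ for all $k$. The base case is $x_0' = 0 = x_0$. For the inductive step, assuming $x_k' = x_k$, I obtain $y_k' = Cx_k' + Du_k' = Cx_k + Du_k = y_k = 0$, and then, using the rewritten update, $x_{k+1}' = Ax_k' + Bu_k' + Ky_k' = Ax_k + Bu_k + 0 = x_{k+1}$. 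Since $u_k' = u_k$ is non-zero, this exhibits a non-zero input for \eqref{eq:standardK} producing identically zero output, so \eqref{eq:standardK} is not invertible.

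For the equivalence, the forward implication (system \eqref{eq:standard} not invertible $\Rightarrow$ system \eqref{eq:standardK} not invertible) is precisely the ``Furthermore'' statement just proved. For the converse, I would invoke the symmetry of the transformation: replacing $K$ by $-K$ in \eqref{eq:standardK} recovers exactly \eqref{eq:standard}, since $(A+KC) - KC = A$ and $(B+KD) - KD = B$. Hence the same induction argument, applied with gain $-K$ starting from a non-zero zero-output input of \eqref{eq:standardK}, shows that this input also produces zero output for \eqref{eq:standard}. This yields both directions for every fixed $K$, which is what the phrase ``for any gain matrix $K$'' requires.

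I do not expect a substantial obstacle here; the argument is an elementary induction coupled with the single observation that output injection leaves the zero-output trajectories invariant. The only points requiring care are to carry the input sequence over unchanged so that its non-zeroness is automatically preserved, and to note that the symmetry $K \leftrightarrow -K$ makes the two directions of the equivalence structurally identical, so that no separate computation is needed for the converse.
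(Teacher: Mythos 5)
Your proposal is correct and takes essentially the same route as the paper: both rest on rewriting the dynamics of \eqref{eq:standardK} as $x_{k+1}' = Ax_k' + Bu_k' + Ky_k'$ and observing that zero-output trajectories of the two systems coincide under the same input. Your $K \leftrightarrow -K$ symmetry for the converse is merely a repackaging of the paper's direct sufficiency computation (where $y_k'=0$ collapses the update to $x_{k+1}'=Ax_k'+Bu_k'$), so no genuinely new idea is involved.
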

\begin{proof}
See Appendix~\ref{app:eq}.
\end{proof}

Before giving the necessary and sufficient condition for strict vulnerability, we need one additional lemma:
\begin{lem}
\label{lem:noninvertibledestable}
Suppose the system \eqref{eq:standard} is non-invertible, and $\begin{bmatrix}
    B\\
    D
  \end{bmatrix}$ has full column rank, then there exists a non-zero input sequence $\{u_k\}$, such that the following holds:
  \begin{align}
\limsup_k \|x_k\|\rightarrow \infty,\text{ and } y_k = 0,\forall k.
  \end{align}
\end{lem}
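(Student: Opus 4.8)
The plan is to reduce the statement to a structural property of the output-nulling dynamics, where the only genuine work is to show that these dynamics are nontrivial and carry freely assignable internal eigenvalues. The full-column-rank hypothesis is used exactly once, to rule out the degenerate case in which every zero-output input leaves the state at the origin.

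First I would record the elementary consequence of full column rank. Since \eqref{eq:standard} is non-invertible, by Definition~\ref{def:invertible} there is a nonzero input $\{u_k\}$ with $x_0=0$ producing $y_k=0$ for all $k$; let $\{x_k\}$ be the associated trajectory. I claim $\{x_k\}$ is not identically zero: if $x_k=0$ for all $k$, then $x_{k+1}=Ax_k+Bu_k=0$ gives $Bu_k=0$ and $y_k=Cx_k+Du_k=0$ gives $Du_k=0$, so $\begin{bmatrix}B\\D\end{bmatrix}u_k=0$, whence full column rank forces $u_k=0$ for all $k$, a contradiction. Consequently the subspace
\[
\mathcal{R}\triangleq\bigl\{\, x_\tau : \tau\in\mathbb{N},\ \{u_k\}\text{ a zero-output input with } x_0=0 \,\bigr\}
\]
of states reachable from the origin while holding the output identically at zero is nontrivial, $\mathcal{R}\neq\{0\}$.

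Next I would invoke the geometric structure of $\mathcal{R}$. It is the reachable output-nulling subspace of \eqref{eq:standard}, hence a controllability subspace: there is a friend $F$ (a feedback $u_k=Fx_k$ with $(A+BF)\mathcal{R}\subseteq\mathcal{R}$ and $(C+DF)|_{\mathcal{R}}=0$) for which the spectrum of $(A+BF)|_{\mathcal{R}}$ is freely assignable (standard in geometric linear control, e.g.\ Wonham's theory of $(A,B)$-invariant and controllability subspaces). I would use this freedom to place an eigenvalue $\mu$ with $|\mu|>1$ and a corresponding eigenvector $\phi\in\mathcal{R}$, $\phi\neq0$ (passing to a real two-dimensional invariant block if $\mu$ is complex). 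With $F$ in force, any trajectory entering $\mathcal{R}$ stays in $\mathcal{R}$ with zero output, since $(C+DF)|_{\mathcal{R}}=0$ and $\mathcal{R}$ is $(A+BF)$-invariant. The growing input is then assembled in two phases: because $\phi\in\mathcal{R}$ is reachable from the origin with zero output, I first steer $x_0=0$ to $x_\tau=\phi$ over a finite horizon keeping $y_k=0$, and then switch to $u_k=Fx_k$ for $k\ge\tau$, giving $x_{\tau+j}=\mu^{j}\phi$ and $y_k=0$ thereafter. Thus $\|x_{\tau+j}\|=|\mu|^{j}\|\phi\|\to\infty$, so $\limsup_k\|x_k\|=\infty$ with $y_k=0$ for all $k$; the input is nonzero because the trajectory is.

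I expect the crux to be the geometric step: that $\mathcal{R}$ is a controllability subspace carrying freely assignable internal dynamics, and that its points are reachable from the origin with identically-zero output. It is worth stressing why reachability is essential and why a naive modal construction fails: given an invariant-zero/input-direction pair one obtains a particular solution $x_k=\mu^k\xi$ with zero output, but enforcing $x_0=0$ introduces the homogeneous term $-A^k\xi$, whose output $-CA^k\xi$ need not vanish; the controllability-subspace structure of $\mathcal{R}$ is precisely what realizes the growing mode from the zero initial state without generating spurious output. (One could alternatively use Lemma~\ref{lem:eq} to first normalize the zero structure, but it is not needed for this route.)
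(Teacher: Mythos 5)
Your proof is correct, but it takes a genuinely different route from the paper's. The paper argues elementarily: after trimming leading zero inputs, the full column rank of $\begin{bmatrix}B\\D\end{bmatrix}$ forces $x_1\neq 0$ on some zero-output trajectory; if that trajectory is not already divergent, say $\sup_k\|x_k\|\leq M$, the paper superposes geometrically weighted time-shifted copies of the same input, $u'_k=\sum_{i=0}^k(2M+1)^{k-i}u_i$, so that by linearity and time-invariance the output stays identically zero while $\|x'_k\|\geq \frac{(2M+1)^{k-1}+1}{2}\rightarrow\infty$. You instead pass to the reachable output-nulling subspace $\mathcal{R}$, use the rank hypothesis in essentially the same way (to conclude $\mathcal{R}\neq\{0\}$), and then invoke the geometric-control fact that $\mathcal{R}$ is a controllability subspace whose internal spectrum is freely assignable by a friend $F$, placing an unstable eigenvalue and steering from the origin to its eigenvector before switching to feedback. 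Both arguments are sound; the difference is in what they rely on and what they deliver. The paper's construction is self-contained, needing nothing beyond linear superposition, and it handles the bounded case directly without any structural theory. Yours leans on a nontrivial external theorem (spectral assignability on output-nulling controllability subspaces with feedthrough $D\neq 0$) which is standard but should be cited precisely, e.g.\ to the geometric references \cite{wonham1985,marro2010} already used in the paper; in exchange it produces an explicit exponential divergence rate and an eigenvector-plus-feedback structure that is closer in spirit to the paper's later vulnerability analysis (Lemma~\ref{lem:noninto} and Theorem~\ref{thm:stable}), where exactly such unstable reachable zero-dynamics are constructed.
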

\begin{proof}
Since the system is non-invertible, there exists a non-zero sequence of input $\{u_k\}$, such that the corresponding $y_k = 0$ for all $k$. Without loss of generality, we shall assume that $u_0 \neq 0$, otherwise we can always trim the leading zero inputs in the sequence $\{u_k\}$. Now notice that
\begin{align}
\begin{bmatrix}
x_1\\
y_0
\end{bmatrix}  = \begin{bmatrix}
  A\\
  C
\end{bmatrix} x_0 + \begin{bmatrix}
  B\\
  D
\end{bmatrix} u_0 =  \begin{bmatrix}
  B\\
  D
\end{bmatrix} u_0 .
  \end{align}
  The fact that $ \begin{bmatrix}
  B\\
  D
\end{bmatrix}$ has full column rank and $u_0\neq 0$ implies that one of $x_1$ and $y_0$ is non-zero. Since $y_k$ is constantly $0$, we conclude that $x_1\neq 0$. Without loss of generality, by proper scaling of $u_k$, we can assume that $\|x_1 \| = 1$.

Now if $\limsup_k \|x_k\| \rightarrow \infty$, then we finish the proof. Otherwise, suppose that $\sup_k \|x_k\| \leq M$. We can recreate an input sequence \footnote{The design of $u'_k$ utilizes the linearity combination property~\cite{chen1998linear}, which guarantees both $y'_k=0$ for all $k\in \mathbb{N}$ and $x'_k=\sum_{i=0}^k(2M+1)^{k-i}x_k$. Moreover, the item $(2M+1)$ is used to ensure the divergence of $x'_k$.} $u'_k$ , such that
\begin{eqnarray}
u'_k = \sum_{i=0}^k(2M+1)^{k-i}{u}_i.\label{eq:striv}
\end{eqnarray}

Based on the property of linear systems, the corresponding state $x'_k$ and measurement $y'_k$ satisfy that
\begin{eqnarray*}
x'_k &=& \sum_{i=0}^k(2M+1)^{k-i}x_i,\\
y'_k &=& \sum_{i=0}^k(2M+1)^{k-i}y_i = 0.
\end{eqnarray*}

Thus, we have
\begin{eqnarray*}
\|x'_k\| &\geq& (2M+1)^{k-1}\|x_1\| - \sum_{i=2}^k(2M+1)^{k-i}\|x_i\|\\
&\geq& (2M+1)^{k-1} - \sum_{i=2}^k(2M+1)^{k-i}M\\
&=& \frac{(2M+1)^{k-1}+1}{2}.
\end{eqnarray*}
It is obvious that $\limsup_{k\rightarrow \infty}\|x'_k\|=\infty$.
\end{proof}

We can now provide the necessary and sufficient condition for strict vulnerability:
\begin{thm}\label{thm:stable1}
The system \eqref{eq:ss1}-\eqref{eq:ss2} is strictly vulnerable if and only if the following system is not invertible:
\begin{eqnarray}
x_{k+1} = Ax_k + \begin{bmatrix}B^a&0\end{bmatrix}\zeta_k,\label{eq:cominv1}\\
y_{k} = CAx_k + \begin{bmatrix}CB^a&\Gamma^a\end{bmatrix}\zeta_k,\label{eq:cominv2}
\end{eqnarray}
where $\zeta_k=\begin{bmatrix}u_k^a\\ y_{k+1}^a\end{bmatrix}$ is the input of system \eqref{eq:cominv1}-\eqref{eq:cominv2}.
\end{thm}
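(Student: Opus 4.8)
The plan is to recognize that strict stealthiness is exactly a statement about the input--output map of the $\triangle e$-dynamics, and then to transport the invertibility of that map to the auxiliary system \eqref{eq:cominv1}--\eqref{eq:cominv2} by means of Lemma~\ref{lem:eq}. The point of departure is \eqref{eq:deltzk}: writing $\zeta_t=\begin{bmatrix}u_t^a\\ y_{t+1}^a\end{bmatrix}$, an attack is strictly stealthy precisely when the linear system with state $\triangle e_t$, input $\zeta_t$ and output $\triangle z_{t+1}$ produces zero output for all $t$. Since the attack begins at time $1$ we have $\triangle e_0=0$ and $\triangle z_0=0$, which matches the zero-initial-state convention of Definition~\ref{def:invertible}, so ``strictly stealthy'' corresponds to the defining condition of invertibility for this system.

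The key algebraic step I would carry out first is to assemble \eqref{eq:deltE} and \eqref{eq:deltzk} into a single state-space model,
\begin{align}
\triangle e_{t+1} &= (I-KC)A\,\triangle e_t+\begin{bmatrix}(I-KC)B^a & -K\Gamma^a\end{bmatrix}\zeta_t,\nonumber\\
\triangle z_{t+1} &= CA\,\triangle e_t+\begin{bmatrix}CB^a & \Gamma^a\end{bmatrix}\zeta_t,\nonumber
\end{align}
and to observe that this is exactly the transformed system \eqref{eq:standardK} obtained from \eqref{eq:cominv1}--\eqref{eq:cominv2} with gain $-K$. Indeed $A+(-K)(CA)=(I-KC)A$ and $\begin{bmatrix}B^a & 0\end{bmatrix}+(-K)\begin{bmatrix}CB^a & \Gamma^a\end{bmatrix}=\begin{bmatrix}(I-KC)B^a & -K\Gamma^a\end{bmatrix}$, while the output matrices $CA$ and $\begin{bmatrix}CB^a & \Gamma^a\end{bmatrix}$ are untouched. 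Lemma~\ref{lem:eq} then tells me that the auxiliary system \eqref{eq:cominv1}--\eqref{eq:cominv2} is non-invertible if and only if the $\triangle e$-system above is non-invertible, and that the same nonzero $\zeta_t$ yields zero output in both.

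With this identification, the reverse implication follows by contraposition: if \eqref{eq:cominv1}--\eqref{eq:cominv2} is invertible, so is the $\triangle e$-system, hence any strictly stealthy attack ($\triangle z_t\equiv 0$) forces $\zeta_t\equiv 0$ and therefore $\triangle e_t\equiv 0$, so the system is not strictly vulnerable. For the forward implication I would start from non-invertibility of \eqref{eq:cominv1}--\eqref{eq:cominv2}, pass to non-invertibility of the $\triangle e$-system, and then invoke Lemma~\ref{lem:noninvertibledestable}. To do so I must verify its full-column-rank hypothesis for the $\triangle e$-system, i.e. that $\begin{bmatrix}(I-KC)B^a & -K\Gamma^a\\ CB^a & \Gamma^a\end{bmatrix}$ has full column rank. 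I would do this by factoring it as $\begin{bmatrix}I & -K\\ 0 & I\end{bmatrix}\begin{bmatrix}B^a & 0\\ CB^a & \Gamma^a\end{bmatrix}$; since the first factor is invertible, it suffices that the second have full column rank, which is immediate from the standing assumption that $B^a$ and $\Gamma^a$ are full column rank. Lemma~\ref{lem:noninvertibledestable} then produces a nonzero $\zeta_t$ --- equivalently attack signals $u_t^a$ and $y_{t+1}^a$ --- with $\triangle z_{t+1}=0$ for all $t$ (strictly stealthy) and $\limsup_t\|\triangle e_t\|=\infty$, which exceeds every $M_1$ and gives strict vulnerability.

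I expect the only real obstacle to be the bookkeeping in the identification step: correctly matching the $\triangle e$/$\triangle z$ recursions with the $-K$-transform of \eqref{eq:cominv1}--\eqref{eq:cominv2}, and confirming that the full-column-rank requirement of Lemma~\ref{lem:noninvertibledestable} is preserved under this transform via invertibility of $\begin{bmatrix}I & -K\\ 0 & I\end{bmatrix}$. Once those two matrix identities are checked, Lemmas~\ref{lem:eq} and~\ref{lem:noninvertibledestable} supply both directions with no further work.
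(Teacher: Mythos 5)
Your proposal is correct and follows essentially the same route as the paper's own proof: both identify the $\triangle e$/$\triangle z$ dynamics as the $(-K)$-transform of \eqref{eq:cominv1}--\eqref{eq:cominv2}, invoke Lemma~\ref{lem:eq} to transfer (non-)invertibility, verify the full-column-rank hypothesis by premultiplying with an invertible block matrix, and apply Lemma~\ref{lem:noninvertibledestable} for the destabilizing strictly stealthy attack. The only cosmetic differences are that you phrase necessity by contraposition and reduce the rank check to the block-triangular matrix $\begin{bmatrix}B^a & 0\\ CB^a & \Gamma^a\end{bmatrix}$ rather than the paper's block-diagonal $\begin{bmatrix}B^a & 0\\ 0 & \Gamma^a\end{bmatrix}$, which changes nothing of substance.
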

\begin{proof}
{\bf Sufficiency:} Firstly, based on Lemma~\ref{lem:eq}, the fact that system \eqref{eq:cominv1}-\eqref{eq:cominv2} is not invertible implies that the following system is not invertible for any $K$:
\begin{eqnarray}
x_{k+1} &=& (A-KCA)x_k \nonumber\\
&&+ \begin{bmatrix}B^a-KCB^a&-K\Gamma^a\end{bmatrix}\zeta_k,\label{eq:cominv3}\\
y_k &=& CAx_k + \begin{bmatrix}CB^a&\Gamma^a\end{bmatrix}\zeta_k.\label{eq:cominv4}
\end{eqnarray}

Notice that
\begin{align*}
\begin{bmatrix}
  I &\\
  -C & I
\end{bmatrix}
  \begin{bmatrix}
  I &K\\
   & I
\end{bmatrix}\begin{bmatrix}
  B^a-KCB^a &-K\Gamma^a\\
  CB^a &\Gamma^a
\end{bmatrix} = \begin{bmatrix}
  B^a&\\
  &\Gamma^a
\end{bmatrix}
\end{align*}
Therefore, $\begin{bmatrix}
  B^a-KCB^a &-K\Gamma^a\\
  CB^a &\Gamma^a
\end{bmatrix}$ has full column rank and by Lemma~\ref{lem:noninvertibledestable}, there exists a sequence of $\zeta_k$ to make $x_k$ unbounded and $y_k = 0$ for all $k$.

Note that \eqref{eq:cominv3}-\eqref{eq:cominv4} is an alternative expression of \eqref{eq:deltE}-\eqref{eq:deltzk} for the dynamics of $\triangle e_t$ and $\triangle z_t$. Hence, there exists a strictly stealthy attack to make $\Delta z_t = 0$ and $\Delta e_t\rightarrow \infty$.

{\bf Necessity:} Suppose  the system \eqref{eq:ss1}-\eqref{eq:ss2} is strictly vulnerable, then there exists a non-zero input $\{\zeta_k:k\in \mathbb{N}\}$ such that $y_k=0$ for all $k\in \mathbb{N}$ in \eqref{eq:cominv4}. This means that the system \eqref{eq:cominv3}-\eqref{eq:cominv4} is non-invertible. Based on  Lemma~\ref{lem:eq}, the system \eqref{eq:cominv1}-\eqref{eq:cominv2} is also not invertible.
\end{proof}

We can further simplify our invertibility condition in Theorem~\ref{thm:stable1} as follows.
\begin{corollary}
The system \eqref{eq:ss1}-\eqref{eq:ss2} is strictly vulnerable if and only if the following system is not invertible:
  \begin{align}
x'_{k+1} = A x'_k + \begin{bmatrix}
B^a&0
\end{bmatrix}\zeta'_k,\,y'_{k} = C x'_k + \begin{bmatrix}
0&\Gamma^a
\end{bmatrix}\zeta'_k.\label{eq:simplifiedsys}
  \end{align}
\end{corollary}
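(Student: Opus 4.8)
The plan is to reduce the corollary to Theorem~\ref{thm:stable1} by proving that the system \eqref{eq:cominv1}-\eqref{eq:cominv2} is non-invertible if and only if the simplified system \eqref{eq:simplifiedsys} is non-invertible; chaining this equivalence with Theorem~\ref{thm:stable1} then gives the claim immediately. The crucial observation is that both systems carry the \emph{identical} state recursion $x_{k+1}=Ax_k+\begin{bmatrix}B^a&0\end{bmatrix}\zeta_k$ with $x_0=0$, so when driven by the same input $\zeta_k=\begin{bmatrix}u_k^a\\ y_{k+1}^a\end{bmatrix}$ they generate the same state trajectory $\{x_k\}$. First I would rewrite both outputs in terms of this common trajectory. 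Using $Ax_k+B^au_k^a=x_{k+1}$, the output of \eqref{eq:cominv1}-\eqref{eq:cominv2} is $y_k=C x_{k+1}+\Gamma^a y_{k+1}^a$, whereas the output of \eqref{eq:simplifiedsys} is $y_k'=C x_k+\Gamma^a y_{k+1}^a$. Thus the two systems differ only in whether the measured state is $x_k$ or $x_{k+1}$, and the whole argument reduces to absorbing this one-step index shift.

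Next I would transfer a zero-output (blocking) input from one system to the other. Suppose \eqref{eq:simplifiedsys} is non-invertible, so some nonzero input satisfies $C x_k+\Gamma^a y_{k+1}^a=0$ for all $k$; this forces $C x_k\in\operatorname{range}(\Gamma^a)$ for every $k$, hence $C x_{k+1}\in\operatorname{range}(\Gamma^a)$ for every $k\ge0$. Keeping $u_k^a$ (and therefore the trajectory $x_k$) unchanged and using that $\Gamma^a$ has full column rank, I would define a new sensor component $\tilde y_{k+1}^a$ as the unique solution of $\Gamma^a\tilde y_{k+1}^a=-C x_{k+1}$, which yields $C x_{k+1}+\Gamma^a\tilde y_{k+1}^a=0$, i.e.\ a blocking input for \eqref{eq:cominv1}-\eqref{eq:cominv2}. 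The converse is symmetric: a blocking input of \eqref{eq:cominv1}-\eqref{eq:cominv2} gives $C x_{k+1}\in\operatorname{range}(\Gamma^a)$ for $k\ge0$, which together with $C x_0=0\in\operatorname{range}(\Gamma^a)$ gives $C x_k\in\operatorname{range}(\Gamma^a)$ for all $k\ge0$, and solving $\Gamma^a\hat y_{k+1}^a=-C x_k$ produces a blocking input for \eqref{eq:simplifiedsys}.

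The step I expect to be the main subtlety is verifying that the transferred input is genuinely \emph{nonzero}, so that non-invertibility is truly carried over rather than trivialised. Here I would argue that the actuator part $\{u_k^a\}$ cannot vanish identically for any nonzero blocking input: if every $u_k^a=0$, then $x_k\equiv0$ (since $x_0=0$), the output collapses to the feedthrough term $\Gamma^a y_{k+1}^a$, and the full column rank of $\Gamma^a$ would force $y_{k+1}^a=0$ for all $k$, contradicting nonzeroness. Hence every blocking input has some $u_k^a\neq0$, and since my construction leaves the actuator component untouched (only the sensor component is redefined), the transferred input remains nonzero. With the equivalence of non-invertibility thereby established in both directions, the corollary follows by combining it with Theorem~\ref{thm:stable1}.
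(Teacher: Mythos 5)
Your proposal is correct and takes essentially the same route as the paper: the paper likewise reduces the corollary to showing that \eqref{eq:cominv1}--\eqref{eq:cominv2} and \eqref{eq:simplifiedsys} are simultaneously (non-)invertible, observing that they share the same state recursion and that shifting the sensor component of the input by one time step turns one system's output into the one-step-delayed output of the other. Your construction of $\tilde y^a_{k+1}$ from $\Gamma^a \tilde y^a_{k+1} = -Cx_{k+1}$ is exactly that index shift in disguise (by full column rank of $\Gamma^a$ it coincides with the original $y^a_{k+2}$), and your explicit verification that the transferred blocking input remains nonzero supplies a detail the paper's terse proof leaves implicit.
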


\begin{proof}
We only need to prove that the system \eqref{eq:cominv1}-\eqref{eq:cominv2} is not invertible if and only if \eqref{eq:simplifiedsys} is not invertible:
Suppose that the input $\zeta_k$ for system \eqref{eq:cominv1}-\eqref{eq:cominv2} is of the form $\zeta_k = \begin{bmatrix}
  u^a_k\\
  y^a_{k+1}
\end{bmatrix}$. Then let
\begin{align*}
\zeta'_k = \begin{bmatrix}
  u^a_{k}\\
  y^a_{k}
\end{bmatrix},
\end{align*}
it follows that\footnote{We assume that $y_{-1} = y_{-1}^a = 0$.}
\begin{align*}
x'_k = x_k,\, y'_{k} = y_{k-1}.
\end{align*}

Therefore, the system \eqref{eq:cominv1}-\eqref{eq:cominv2} is non-invertible if and only if \eqref{eq:simplifiedsys} is non-invertible.
\end{proof}

\begin{rem}
From the viewpoint of structured linear system, one can use Theorem~2 in \cite{Dion2003} to derive the generic rank of the transfer function of the system described in \eqref{eq:simplifiedsys} and thus check if the system is left invertible or not (i.e., if the transfer function from the input to the output is full row rank or not.)
\end{rem}

\section{The Necessary and Sufficient Condition for Vulnerability}\label{subsec:v}
Next, we study the necessary and sufficient condition for vulnerability.
\begin{defn}\label{def:invariant}
The set $V$ is {\em invariant} if, for any $v\in V$, there exists $u$ such that
\begin{eqnarray}
A\upsilon + Bu\in V,C\upsilon+Du=0.\label{eq:invariant1}
\end{eqnarray}
Let $V_m$ be the maximum invariant subspace, the existence of which is proven in \cite{Anderson1975}. The {\em maximum reachable invariant set} is given by\footnote{The invariant set $V$ here is also called {\em output-nulling controlled invariant subspace} in \cite{marro2010} and \cite{Anderson1975}. Especially, the maximum reachable invariant set $V^*$ is called {\em the maximum output-nulling controlled invariant subspace}.}
\begin{eqnarray}
V^*=\text{span}\begin{pmatrix}B&AB&\ldots&A^{n-1}B\end{pmatrix}\cap V_m. \label{eq:invariant_m}
\end{eqnarray}
\end{defn}
A property of the maximum reachable invariant set is shown below.
\begin{lem}\label{lem:invariant}
If the system \eqref{eq:standard} is invertible, for each $x \in V^*$, then there exists a unique $u$ such that
\begin{align}
Ax + Bu\in V^*,\,Cx+Du = 0.  \label{xxx}
\end{align}

Moreover, there exists a matrix $Q$ such that $u=Qx$ for every pairs $(x,u)$ satisfying \eqref{xxx}.
\end{lem}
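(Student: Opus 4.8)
The plan is to prove the three assertions—existence, uniqueness, and the linear representation—in that order, since the uniqueness argument will reuse the existence statement to build an auxiliary trajectory. Throughout I write $\mathcal{R}=\spspace(B\ AB\ \cdots\ A^{n-1}B)$ for the reachable subspace, so that by definition $V^*=\mathcal{R}\cap V_m$.

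For \textbf{existence}, fix $x\in V^*$. Since $x\in V_m$ and $V_m$ is invariant in the sense of Definition~\ref{def:invariant}, there is some $u$ with $Ax+Bu\in V_m$ and $Cx+Du=0$; the only extra thing to check is that this vector also lies in $\mathcal{R}$, so that it belongs to $V^*=\mathcal{R}\cap V_m$. This follows because $\mathcal{R}$ is $A$-invariant (by Cayley--Hamilton, $A^nB$ is a linear combination of $B,AB,\dots,A^{n-1}B$, whence $A\mathcal{R}\subseteq\mathcal{R}$) and contains the range of $B$; thus $Ax\in A\mathcal{R}\subseteq\mathcal{R}$ and $Bu\in\mathcal{R}$, giving $Ax+Bu\in\mathcal{R}$, and combined with $Ax+Bu\in V_m$ this yields $Ax+Bu\in V^*$.

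For \textbf{uniqueness}, suppose $u_1$ and $u_2$ both satisfy \eqref{xxx} for the same $x\in V^*$, and set $\delta=u_1-u_2$. Subtracting the two constraint sets and using that $V^*$ is a subspace gives $B\delta\in V^*$ and $D\delta=0$. I would then exhibit a zero-output trajectory of \eqref{eq:standard} starting from rest: take $\hat{x}_0=0$ and input $\hat{u}_0=\delta$, so that $\hat{x}_1=B\delta\in V^*$ and $y_0=D\delta=0$; then, since each $\hat{x}_k\in V^*$, I repeatedly invoke the existence part just proved to choose inputs $\hat{u}_k$ keeping $\hat{x}_{k+1}\in V^*$ and forcing $y_k=0$ for all $k$. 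By invertibility (Definition~\ref{def:invertible}), the only input producing the identically-zero output from $\hat{x}_0=0$ is the zero input; in particular $\hat{u}_0=\delta=0$, i.e.\ $u_1=u_2$. This step—converting the purely algebraic relations $B\delta\in V^*$, $D\delta=0$ into a genuine zero-output trajectory so that left-invertibility can be applied—is the crux of the argument and the part I expect to demand the most care.

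For the \textbf{linear representation}, existence and uniqueness together define a well-defined map $x\mapsto u$ on $V^*$. Linearity is then immediate from uniqueness: if $x_i\mapsto u_i$ for $i=1,2$, then for any scalars $\alpha,\beta$ the vector $\alpha u_1+\beta u_2$ satisfies \eqref{xxx} with $x=\alpha x_1+\beta x_2$ (because $V^*$ is a subspace and both defining constraints are linear), so by uniqueness it is exactly the image of $\alpha x_1+\beta x_2$. Hence the map is linear on $V^*$, and extending it by any linear map on a complementary subspace of $\mathbb{R}^n$ produces a matrix $Q$ with $u=Qx$ for every pair $(x,u)$ satisfying \eqref{xxx}.
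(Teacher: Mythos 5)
Your proof is correct and takes essentially the same route as the paper: uniqueness is obtained by turning the input difference $\delta=u_1-u_2$ into a zero-output trajectory starting from the origin (using the invariance of $V^*$ to extend it indefinitely) and then invoking invertibility, and $Q$ is obtained as the linear map induced by existence plus uniqueness. The only difference is that you explicitly verify the existence step, namely that $V^*=\mathcal{R}\cap V_m$ is itself controlled invariant via the $A$-invariance of the reachable subspace, a point the paper uses but leaves implicit.
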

\begin{proof}
See Appendix~\ref{app:invariant}.
%
\end{proof}
Before the proof of necessary and sufficient condition for vulnerability, a notation is defined to facilitate Lemmas~\ref{lem:suf}-\ref{lem:nes}.
\begin{defn}\label{defn:zero}
The system
\begin{eqnarray}
x_{k+1}=(A+BQ)x_k,~y_k=(C+DQ)x_k\label{eq:standardQ}
\end{eqnarray}
has {\em unstable reachable zero-dynamic}, if there exist a vector $v$ satisfying the following conditions:

1) $v$ is an unstable eigenvector of $A+BQ$ and its corresponding eigenvalue is $\lambda$ with $|\lambda|\geq 1$;

2) $(C+DQ)v=0$;

3) $v$ is reachable for $(A,B)$.
\end{defn}
\begin{rem}
The {\em unstable reachable zero-dynamic} in Definition~\ref{defn:zero} contains three parts:

1) The existence of zero-dynamic space for \eqref{eq:standardQ}, which guarantees that the output (i.e., residue) is bounded;

2) The zero-dynamic space contains an unstable eigen-space of \eqref{eq:standardQ}, which makes the state diverge;

3) The reachability of $v$, which implies that a vector satisfying 1) and 2) can be reached by certain sequence of input $u_k$.
\end{rem}
Next, we provide a lemma on the zero-dynamic property to study the sufficiency conditions for vulnerability.
\begin{lem}\label{lem:suf}
Consider the system \eqref{eq:standard}, for any $M>0$, there exists a stealthy input sequence $\{u_k:k\in \mathbb{N}\}$ such that
\begin{eqnarray*}
\limsup_{k\rightarrow \infty}\|{x}_k\|&>&M,\\
\|y_k\|&\leq&\delta,\forall k\in \mathbb{N},
\end{eqnarray*}
if there exists a matrix $Q$ such that the system \eqref{eq:standardQ} has {\em unstable reachable zero-dynamic}.
\end{lem}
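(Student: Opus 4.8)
The plan is to exploit the hypothesis constructively. The matrix $Q$ furnishes a state-feedback law $u_k = Qx_k$ under which the closed loop \eqref{eq:standardQ} possesses a reachable mode $v$ that is annihilated by the output map and is (marginally or strictly) unstable. The strategy has two phases: first steer the state from the origin into this mode, using the reachability of $v$, over a finite and hence bounded transient; then let the unstable mode grow while holding the output at zero via the feedback $u_k = Qx_k$. Because the only output excursion occurs during the finite steering phase, its size scales linearly with the steering amplitude and can be kept below the stealthiness level $\delta$ by choosing that amplitude small.

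First I would make the output-nulling mode explicit over the reals. Writing $\lambda$ with $|\lambda|\ge 1$ and $v = v_R + i v_I$, the reality of $A,B,C,D,Q$ implies that $W := \spspace\{v_R, v_I\}$ is invariant under $A+BQ$, that $A+BQ$ acts on $W$ as $|\lambda|$ times a rotation (a pure scaling when $\lambda$ is real), and that $(C+DQ)w = 0$ for every $w \in W$ (since $(C+DQ)v=0$ forces $(C+DQ)\bar v = 0$). Reachability of $v$ for $(A,B)$ gives $W \subseteq \operatorname{Im}\begin{pmatrix}B & AB & \cdots & A^{n-1}B\end{pmatrix}$, so every state in $W$ is reachable from the origin in at most $n$ steps. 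Fixing a length-$n$ steering input that drives $0$ to a chosen point of $W$, the associated output is a finite sequence $r_0,\dots,r_{n-1}$ whose norm is proportional to the steering amplitude; after step $n$ the feedback $u_k = Qx_k$ keeps the state in $W$ and makes $y_k = (C+DQ)x_k = 0$.

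The two regimes are then handled separately. If $|\lambda| > 1$, a single scaled ``steer-then-feedback'' trajectory already suffices: the state behaves like $|\lambda|^{k}$ and diverges, the output is zero after step $n$ and bounded by the small steering amplitude before it, so $\limsup_k\|x_k\| = \infty$ with $\|y_k\|\le\delta$. The delicate case is $|\lambda| = 1$, where following the mode produces only a constant-magnitude (rotating) trajectory. Here I would superpose time-shifted, amplitude-bounded copies of the steering package in the spirit of Lemma~\ref{lem:noninvertibledestable}, but with the crucial refinement that successive injections are \emph{phase-aligned}. I choose an increasing sequence of injection times spaced at least $n$ apart, so the finite steering transients never overlap and at most one contributes to the output at any instant, giving $\|y_k\|\le c\rho \le \delta$ with $c$ the fixed injection amplitude and $\rho$ a bound on the normalized steering output. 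At each injection I add a state of magnitude $c$ lying in $W$ and pointing along the already-accumulated state; since the evolution on $W$ preserves magnitude when $|\lambda|=1$, each injection raises the accumulated magnitude by exactly $c$, so $\|x_k\|\to\infty$ along the injection-completion times.

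The main obstacle is precisely this marginal case with a complex eigenvalue: a naive superposition of identically shifted copies fails, because for an irrational rotation the partial sums $\sum_{m}\lambda^{mT}$ stay bounded and the contributions cancel rather than accumulate. Overcoming this is exactly what the alignment step does, and it rests on two facts already available: linearity of \eqref{eq:standard}, so that shifted, scaled, and rotated trajectories superpose; and reachability of the entire plane $W$, so that at each injection the steering input can be rotated to match the current accumulated direction. Once the aligned construction yields $\limsup_k\|x_k\| = \infty$ while every output excursion is a single amplitude-$c$ transient bounded by $\delta$, the conclusion holds for every $M>0$.
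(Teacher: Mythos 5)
Your proposal is correct and follows essentially the same route as the paper's proof: a finite reachability-based steering phase, scaled small enough that the transient output stays below $\delta$, followed by the output-nulling feedback $u_k=Qx_k$, with a single steer-then-feedback trajectory sufficing when $|\lambda|>1$ and phase-aligned, non-overlapping re-injections of the rotated steering package producing linear growth of the state when $|\lambda|=1$. The paper implements your ``phase alignment'' in complex arithmetic --- via the input $\acute{u}_{kn+j} = k\lambda^{nk+j-n}Qv + \lambda^{nk}u_j$, taking real or imaginary parts only at the very end --- rather than working in the real plane $\mathrm{span}\{v_R,v_I\}$, but the construction and the resolution of the marginal-eigenvalue cancellation issue are the same.
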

\begin{proof}
See Appendix~\ref{app:suf}.
\end{proof}
The following two lemmas on the conditions for zero-dynamic are needed for studying the necessity conditions for vulnerability.

\begin{lem}\label{lem:noninto}
Suppose the system \eqref{eq:standard} is non-invertible, then there exists a matrix $Q$ such that the system \eqref{eq:standardQ} has {\em unstable reachable zero-dynamic}.
\end{lem}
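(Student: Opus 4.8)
The plan is to argue within the geometric framework of controlled invariant subspaces and to produce the vector $v$ of Definition~\ref{defn:zero} as an eigenvector of $A+BQ$ that lies in the maximum reachable invariant set $V^*$ defined in~\eqref{eq:invariant_m}. The point of working with $V^*$ is that membership in $V^*$ delivers two of the three required properties \emph{for free}: any $v\in V^*$ lies in $\spspace(B,AB,\dots,A^{n-1}B)$, so it is reachable for $(A,B)$ (condition~3), and $V^*\subseteq V_m$ is output-nulling, so a suitable feedback annihilates the output along $v$ (condition~2). The only remaining task is to arrange, by the choice of $Q$, that the eigenvalue attached to $v$ is unstable (condition~1).

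First I would show that non-invertibility forces $V^*\neq\{0\}$. By definition there is a non-zero input sequence $\{u_k\}$ with $x_0=0$ producing $y_k=0$ for all $k$; the resulting state trajectory stays inside the output-nulling controlled invariant subspace $V_m$, and, being generated from the origin, also lies in the reachable subspace $\spspace(B,AB,\dots,A^{n-1}B)$. Hence the non-trivial part of this trajectory lies in the intersection~\eqref{eq:invariant_m}, giving $V^*\neq\{0\}$. The degenerate case in which the trajectory is identically zero (a non-zero input in $\ker\begin{bmatrix}B\\D\end{bmatrix}$) is precisely what is excluded when $\begin{bmatrix}B\\D\end{bmatrix}$ has full column rank, the regime in which this lemma is invoked, cf.~Lemma~\ref{lem:noninvertibledestable}.

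Next I would exploit the fact that $V^*$ is a reachability (controllability) subspace contained in the output-nulling set. For such a subspace there is a friend $Q$ with $(A+BQ)V^*\subseteq V^*$ and $(C+DQ)v=0$ for every $v\in V^*$, and, crucially, the spectrum of the restriction $(A+BQ)|_{V^*}$ can be assigned freely by the choice of $Q$ (see~\cite{wonham1985,Anderson1975}). Using this freedom I would place an eigenvalue $\lambda$ with $|\lambda|\geq 1$ on $V^*$, extend $Q$ arbitrarily to a map on all of $\mathbb{R}^n$, and take a corresponding eigenvector $v\in V^*$ with $(A+BQ)v=\lambda v$. Then $v$ is an unstable eigenvector of $A+BQ$, satisfies $(C+DQ)v=0$, and is reachable for $(A,B)$, which is exactly an unstable reachable zero-dynamic for~\eqref{eq:standardQ}.

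The \textbf{main obstacle} is the spectral step: establishing that the intersection definition of $V^*$ in~\eqref{eq:invariant_m} really coincides with the supremal reachability subspace inside $\ker\begin{bmatrix}C&D\end{bmatrix}$, on which the closed-loop eigenvalues are freely assignable, and that a single friend $Q$ can install the unstable eigenvalue while maintaining invariance $(A+BQ)V^*\subseteq V^*$. The invariant-zero directions, whose eigenvalues cannot be moved, must be seen to lie outside $V^*$ so that they do not obstruct the assignment. The secondary subtlety is the first step's equivalence between non-invertibility and $V^*\neq\{0\}$, which is where the full-column-rank hypothesis on $\begin{bmatrix}B\\D\end{bmatrix}$ is needed to rule out trivial input nullspaces.
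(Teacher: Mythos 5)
Your overall strategy---produce the eigenvector inside a reachability-type subspace of the output-nulling set, so that reachability (condition~3) and output annihilation (condition~2) come for free and only the eigenvalue must be arranged---is the classical geometric route, and it can be made to work; but the step you yourself flag as the ``main obstacle'' is not a removable technicality: as stated, it is false. The spectrum of $(A+BQ)|_{V^*}$ is \emph{not} freely assignable over friends $Q$ when $V^*$ is the paper's set \eqref{eq:invariant_m}, i.e.\ the intersection of $V_m$ with the reachable subspace of $(A,B)$. Free assignment holds only on the supremal \emph{output-nulling reachability subspace} $\mathcal{R}^*$, which can be a strict subspace of $V^*$: the part of $V^*$ beyond $\mathcal{R}^*$ carries eigenvalues that are fixed for every friend (invariant zeros), and ordinary $(A,B)$-reachability of a direction does nothing to make its closed-loop eigenvalue movable while keeping the output nulled. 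Concretely, take $A=\begin{bmatrix}1/2&0\\0&1/3\end{bmatrix}$, $B=\begin{bmatrix}1\\1\end{bmatrix}$, $C=\begin{bmatrix}0&1\end{bmatrix}$, $D=0$. The pair $(A,B)$ is reachable and $V_m=\mathrm{span}\{e_1\}$, hence $V^*=\mathrm{span}\{e_1\}\neq\{0\}$; yet every friend must have the form $Q=\begin{bmatrix}0&q_2\end{bmatrix}$ and then $(A+BQ)e_1=\tfrac{1}{2}e_1$, so the eigenvalue on $V^*$ is frozen at the stable value $1/2$, while $\mathcal{R}^*=\{0\}$. So invariant-zero directions do \emph{not} in general ``lie outside $V^*$,'' and an argument that only establishes $V^*\neq\{0\}$ plus assignability-on-$V^*$ cannot close the proof; this example (which is invertible, consistent with the lemma) also shows that $V^*\neq\{0\}$ is not the right certificate of non-invertibility in the first place.

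The repair is local but changes the object you work with: use $\mathcal{R}^*$ instead of $V^*$. Your first step in fact proves the stronger statement $\mathcal{R}^*\neq\{0\}$: under $\ker\begin{bmatrix}B\\D\end{bmatrix}=\{0\}$ the nonzero output-nulling input from $x_0=0$ produces a nonzero state, and every state of that trajectory is reachable \emph{along an output-nulling trajectory}, which is exactly membership in $\mathcal{R}^*$, not merely in $V^*$. On $\mathcal{R}^*$ the spectrum of $(A+BQ)|_{\mathcal{R}^*}$ \emph{is} freely assignable over friends (Wonham's theory; for $D\neq 0$ see \cite{Anderson1975,marro2010}), so you may install $\lambda$ with $|\lambda|\geq 1$ and take an eigenvector $v\in\mathcal{R}^*\subseteq V^*$; conditions 1)--3) of Definition~\ref{defn:zero} then follow as you argue. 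It is worth noting that the paper avoids this machinery entirely: it splices shifted copies of the output-nulling input to build a trajectory $\bar{x}_0=0,\bar{x}_1,\dots,\bar{x}_{T+1}=0$ that returns to the origin with zero output, sets $x=\sum_{k=0}^{T}\lambda^{-k}\bar{x}_k$ and $u=\sum_{k=0}^{T}\lambda^{-k}\bar{u}_k$, picks any $Q$ with $Qx=u$, and verifies $(A+BQ)x=\lambda x$ and $(C+DQ)x=0$ by a telescoping sum, valid for every $\lambda\geq 1$. Your route, once corrected to go through $\mathcal{R}^*$, buys reusable structure from standard geometric theory; the paper's explicit construction buys self-containedness and sidesteps the assignability question that sinks your current draft.
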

\begin{proof}
See Appendix~\ref{app:nes1}.
\end{proof}

\begin{lem}\label{lem:nes}
Consider an invertible system \eqref{eq:standard} with $\ker(B)\cap \ker(D)=\emptyset$. Suppose that, for any $M>0$, there exists a stealthy input sequence $\{u_k:k\in \mathbb{N}\}$ such that
\begin{eqnarray*}
\limsup_{k\rightarrow \infty}\|{x}_k\|&>&M,\\
\|y_k\|&\leq&\delta,\forall k\in \mathbb{N},
\end{eqnarray*}
then there exists a matrix $Q$ such that the system \eqref{eq:standardQ} has {\em unstable reachable zero-dynamic}.
\end{lem}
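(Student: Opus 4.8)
The plan is to run the geometric-control argument dual to Lemma~\ref{lem:suf}: starting from a stealthy trajectory whose state diverges, I will extract an unstable, reachable, output-nulling eigenvector of $A+BQ$, where $Q$ is the friend of the maximum reachable invariant set supplied by Lemma~\ref{lem:invariant}. First I would invoke Lemma~\ref{lem:invariant} to obtain a matrix $Q$ with $(A+BQ)V^*\subseteq V^*$ and $(C+DQ)v=0$ for every $v\in V^*$. Since the trajectories of interest start from $x_0=0$ (recall $\triangle e_0=0$), every state $x_k$ lies in the reachable subspace $R=\text{span}(B\ AB\ \cdots\ A^{n-1}B)$, and $R$ is both $A$- and $(A+BQ)$-invariant because $\operatorname{Im}B\subseteq R$. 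Choosing coordinates in which $V^*\subseteq R$ is a coordinate subspace and applying the preliminary feedback $u_k=Qx_k+\tilde u_k$, the system \eqref{eq:standard} takes the block-triangular form $x^1_{k+1}=A_{11}x^1_k+A_{12}x^2_k+B_1\tilde u_k$, $x^2_{k+1}=A_{22}x^2_k+B_2\tilde u_k$, $y_k=C_2x^2_k+D\tilde u_k$, where $A_{11}=(A+BQ)|_{V^*}$ is the reachable zero-dynamics map; the lower-left block and the $x^1$-part of the output vanish precisely because $V^*$ is $(A+BQ)$-invariant and output-nulling.

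The hard part will be to show that a bounded output forces the transverse state $x^2_k$ and the residual input $\tilde u_k$ to remain bounded. The key point is that, because $V^*$ is the \emph{maximal} output-nulling reachable subspace, the transverse subsystem $(A_{22},B_2,C_2,D)$ admits no nonzero output-nulling controlled invariant subspace: any such subspace would lift to a larger output-nulling controlled invariant subspace of the reachable system containing $V^*$, contradicting maximality. Combined with the hypothesis $\ker(B)\cap\ker(D)=\{0\}$ (which is inherited and guarantees that $\tilde u_k$ genuinely affects the pair $(x^2_{k+1},y_k)$), this makes the transverse subsystem left-invertible with no invariant zeros, i.e.\ strongly observable. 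I would then establish the quantitative version: there exist a finite window length $N$ and a constant $c$ with $\|x^2_k\|+\|\tilde u_k\|\le c\max_{0\le j\le N}\|y_{k+j}\|$, obtained by writing the outputs over the window as an injective linear map of $(x^2_k,\tilde u_k,\dots,\tilde u_{k+N})$ and left-inverting it. Hence $\|y_k\|\le\delta$ for all $k$ yields uniformly bounded $x^2_k$ and $\tilde u_k$. This strong-observability bound is the main technical obstacle, and it is exactly where left-invertibility, the maximality of $V^*$, and the full-column-rank condition on $\begin{bmatrix}B\\D\end{bmatrix}$ are all consumed.

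With the transverse part bounded, the $x^1$-recursion reads $x^1_{k+1}=A_{11}x^1_k+g_k$ with $\{g_k\}$ uniformly bounded. If $A_{11}=(A+BQ)|_{V^*}$ were Schur stable, then $x^1_k$, and therefore $x_k=(x^1_k,x^2_k)$, would be bounded, contradicting the hypothesis that $\limsup_k\|x_k\|>M$ for every $M$. Consequently $A_{11}$ must possess an eigenvalue $\lambda$ with $|\lambda|\ge 1$; let $v\in V^*$ be a corresponding eigenvector. Then $v$ is an unstable eigenvector of $A+BQ$, it satisfies $(C+DQ)v=0$ since $v\in V^*$, and it is reachable for $(A,B)$ because $V^*\subseteq R$. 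By Definition~\ref{defn:zero} this is precisely an unstable reachable zero-dynamic for \eqref{eq:standardQ}, which completes the proof. The only delicate bookkeeping beyond the strong-observability estimate is ensuring the unstable mode is captured inside the reachable part $V^*$ rather than in an unobservable-yet-unreachable direction, which is handled cleanly by the reduction to the reachable subspace $R$ carried out at the outset.
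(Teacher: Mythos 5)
Your proposal is correct in substance, but it follows a genuinely different route from the paper's proof. The paper argues directly from trajectories: it uses invertibility (via the Sain--Massey reconstruction of the input from outputs and states) to bound $\|u_k\|/(\|x_k\|+1)$, picks a family of diverging stealthy trajectories with a ``peak'' property, normalizes them and extracts limit vectors $\check x_q,\check u_q$ by Bolzano--Weierstrass, shows these limits span an output-nulling controlled invariant subspace inside $V^*$ on which $\check u_q=Q\check x_q$ by Lemma~\ref{lem:invariant}, and finally obtains instability of $A+BQ$ on $V^*$ from the peak inequality $\|(A+BQ)^p\check x_{-p}\|\geq\|\check x_{-p}\|$. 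You instead argue structurally: restrict to the reachable subspace, split the state along $V^*$ using its friend $Q$, show the quotient (transverse) subsystem has no nonzero output-nulling controlled invariant subspace (by maximality of $V^*$) and is left invertible, deduce a uniform window estimate $\|x^2_k\|+\|\tilde u_k\|\leq c\max_{0\leq j\leq N}\|y_{k+j}\|\leq c\delta$, and conclude that divergence must occur inside $V^*$, forcing an eigenvalue of $(A+BQ)|_{V^*}$ with $|\lambda|\geq 1$; reachability is then automatic since $V^*$ lies in the reachable subspace. Your route is more modular, imports standard geometric/strong-observability machinery, and yields a quantitative by-product (the non-$V^*$ part of every stealthy trajectory is uniformly bounded by $c\delta$), which connects naturally to the paper's Section~\ref{sec:bound}. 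The paper's route is more self-contained and elementary, needing only compactness and linearity.

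Two details in your write-up need tightening, though neither is fatal. First, the claim that full column rank of $\begin{bmatrix}B\\D\end{bmatrix}$ is ``inherited'' by the transverse pair $\begin{bmatrix}B_2\\D\end{bmatrix}$ is not automatic: a nonzero $u$ with $Bu\in V^*$ and $Du=0$ would violate it. It is in fact true here, but proving it consumes the invertibility hypothesis (such a $u$, continued with the friend $Q$ of $V^*$, would produce a nonzero input with identically zero output from $x_0=0$, contradicting Definition~\ref{def:invertible}); alternatively, you can bypass the issue entirely by lifting any zero-output transverse trajectory to a zero-output trajectory of the full system started at the origin and invoking the full system's left invertibility and Proposition~\ref{prop:invertible} to recover $\tilde u_k$. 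Second, the window map $(x^2_k,\tilde u_k,\dots,\tilde u_{k+N})\mapsto(y_k,\dots,y_{k+N})$ is generally \emph{not} injective (trailing inputs are unrecoverable inside the window, e.g.\ whenever $D$ fails to be injective), so it cannot literally be left-inverted; the correct statement is that its kernel has zero projection onto the $(x^2_k,\tilde u_k)$ coordinates (strong observability gives $x^2_k=0$, then left invertibility with delay at most $n$ gives $\tilde u_k=0$), and the uniform constant $c$ is then obtained by inverting the map on a complement of its kernel. With these repairs your argument is complete.
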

\begin{proof}
See Appendix~\ref{app:nes2}.
\end{proof}
Based on the results in Lemmas~\ref{lem:suf}-\ref{lem:nes}, the main result on vulnerability is given below.
\begin{thm}\label{thm:stable}
The system \eqref{eq:ss1}-\eqref{eq:ss2} is vulnerable if and only if there exists a vector $v$ and matrix $Q$ satisfying the following conditions:

1) $v$ is an unstable eigenvector of $A+B^aQ$ and its corresponding eigenvalue is $\lambda$ with $|\lambda|\geq 1$;

2) $Cv\in \text{span}(\Gamma^a)$;

3) $v$ is reachable for $(A-KCA,\begin{bmatrix}B^a-KCB^a&-K\Gamma^a\end{bmatrix})$.
\end{thm}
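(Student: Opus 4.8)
The plan is to read vulnerability straight off the error dynamics. By Lemma~\ref{lem:equi}, the system is vulnerable precisely when, for every $M_1>0$, some stealthy attack ($\|\triangle z_t\|\le\delta$) forces $\limsup_t\|\triangle e_t\|>M_1$; and by \eqref{eq:deltE}--\eqref{eq:deltzk} the pair $(\triangle e_t,\triangle z_t)$ evolves according to the linear system \eqref{eq:cominv3}--\eqref{eq:cominv4} with input $\zeta_t=\begin{bmatrix}u_t^a\\y_{t+1}^a\end{bmatrix}$, state $\triangle e_t$ and output $\triangle z_t$. Abbreviating this quadruple as $(\tilde A,\tilde B,\tilde C,\tilde D)$ with $\tilde A=A-KCA$, $\tilde B=\begin{bmatrix}B^a-KCB^a&-K\Gamma^a\end{bmatrix}$, $\tilde C=CA$ and $\tilde D=\begin{bmatrix}CB^a&\Gamma^a\end{bmatrix}$, vulnerability is exactly the destabilization-versus-stealthiness property studied in Lemmas~\ref{lem:suf}--\ref{lem:nes} with \eqref{eq:standard} taken to be this error system. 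So I would apply those three lemmas to $(\tilde A,\tilde B,\tilde C,\tilde D)$ and then translate the ``unstable reachable zero-dynamic'' of the closed loop $(\tilde A+\tilde B\bar Q,\tilde C+\tilde D\bar Q)$ into conditions 1)--3). Note I must work with \eqref{eq:cominv3}--\eqref{eq:cominv4} and not \eqref{eq:cominv1}--\eqref{eq:cominv2}, since only the former has $\triangle e_t$ as its state and the reachability in Definition~\ref{defn:zero} is taken with respect to its open-loop pair $(\tilde A,\tilde B)$, matching condition 3) verbatim.

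The engine of the translation is the identity
\[
\tilde A+\tilde B\bar Q=(A+B^a\bar Q_1)-K(\tilde C+\tilde D\bar Q),
\]
where $\bar Q=\begin{bmatrix}\bar Q_1\\\bar Q_2\end{bmatrix}$ with $\bar Q_1$ feeding the actuator attack and $\bar Q_2$ the sensor attack; it holds by direct expansion. On the zero-dynamic subspace, where $(\tilde C+\tilde D\bar Q)v=0$, this collapses to $(\tilde A+\tilde B\bar Q)v=(A+B^a\bar Q_1)v$, so $v$ is an unstable eigenvector of $\tilde A+\tilde B\bar Q$ with eigenvalue $\lambda$ iff $(A+B^a\bar Q_1)v=\lambda v$; setting $Q=\bar Q_1$ this is condition 1). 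Next, since $C(A+B^a\bar Q_1)v=\lambda Cv$, the output equation $(\tilde C+\tilde D\bar Q)v=CAv+CB^a\bar Q_1 v+\Gamma^a\bar Q_2 v=0$ becomes $\lambda Cv+\Gamma^a\bar Q_2 v=0$; as $|\lambda|\ge1$ forces $\lambda\neq0$, this says $Cv\in\text{span}(\Gamma^a)$, which is condition 2), and conversely any $v$ with $Cv=\Gamma^a g$ is accommodated by choosing $\bar Q_2$ with $\bar Q_2 v=-\lambda g$.

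For sufficiency I would start from conditions 1)--3), set $\bar Q_1=Q$ and pick $\bar Q_2$ as above so that $(\tilde A+\tilde B\bar Q,\tilde C+\tilde D\bar Q)$ has unstable reachable zero-dynamic along $v$, then invoke Lemma~\ref{lem:suf} to obtain, for every $M$, a stealthy input making $\|\triangle e_t\|$ exceed $M$, i.e.\ vulnerability. For necessity I would split on the invertibility of $(\tilde A,\tilde B,\tilde C,\tilde D)$: if it is non-invertible, Lemma~\ref{lem:noninto} already supplies a $\bar Q$ with unstable reachable zero-dynamic; if it is invertible, I would first verify that $\begin{bmatrix}\tilde B\\\tilde D\end{bmatrix}$ has full column rank using the same factorization $\begin{bmatrix}I&0\\-C&I\end{bmatrix}\begin{bmatrix}I&K\\0&I\end{bmatrix}\begin{bmatrix}\tilde B\\\tilde D\end{bmatrix}=\begin{bmatrix}B^a&0\\0&\Gamma^a\end{bmatrix}$ as in Theorem~\ref{thm:stable1} (valid because $B^a,\Gamma^a$ are full column rank), so that the hypothesis $\ker(\tilde B)\cap\ker(\tilde D)=\{0\}$ of Lemma~\ref{lem:nes} is met and, together with the divergence coming from vulnerability, that lemma again yields the zero-dynamic. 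In both cases the translation of the preceding paragraph converts it into conditions 1)--3).

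The hard part will not be any single lemma but the bookkeeping of the translation: making sure reachability is expressed with respect to the output-injected pair $(\tilde A,\tilde B)$ (hence the insistence on \eqref{eq:cominv3}--\eqref{eq:cominv4}), checking that the displayed identity genuinely annihilates the $K$-dependent term on $\ker(\tilde C+\tilde D\bar Q)$ so that $\tilde A+\tilde B\bar Q$ and $A+B^a\bar Q_1$ agree on the eigenvector, and confirming the full-column-rank hypothesis needed for Lemma~\ref{lem:nes}. Once these are in place, the eigenvalue/kernel manipulation relating 1'),2') to 1),2) is routine.
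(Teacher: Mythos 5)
Your proposal is correct and follows essentially the same route as the paper's own proof: the paper likewise works on the error system \eqref{eq:cominv3}--\eqref{eq:cominv4}, uses the identity $(A-KCA)+\begin{bmatrix}B^a-KCB^a&-K\Gamma^a\end{bmatrix}\begin{bmatrix}Q\\W\end{bmatrix}=(A+B^aQ)-K\bigl(CA+\begin{bmatrix}CB^a&\Gamma^a\end{bmatrix}\begin{bmatrix}Q\\W\end{bmatrix}\bigr)$ with $W$ chosen so that $Wv=-\lambda y^*$ where $Cv=\Gamma^ay^*$ (your $\bar Q_2v=-\lambda g$), invokes Lemma~\ref{lem:suf} for sufficiency, and for necessity splits on invertibility, establishing the full-column-rank hypothesis for Lemma~\ref{lem:nes} via the same block factorization from Theorem~\ref{thm:stable1} and handling the non-invertible case with Lemma~\ref{lem:noninto}. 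Your bookkeeping (reachability taken with respect to the output-injected pair, and $\lambda\neq 0$ from $|\lambda|\geq 1$ in the span argument) matches the paper's, and is in fact stated slightly more carefully.
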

\begin{proof}
{\bf Sufficiency:}. We need to show that conditions 1)-3) imply vulnerability.

Since $Cv\in \text{span}(\Gamma^a)$, there exist a vector $y^*$ such that $Cv=\Gamma^ay^*$ and a matrix $W$ such that $\lambda y^*=-Wv$. Taking a gain matrix $\begin{bmatrix}Q\\W\end{bmatrix}$, then
\begin{eqnarray}
[CA+\begin{bmatrix}CB^a&\Gamma^a\end{bmatrix}\begin{bmatrix}Q\\W\end{bmatrix}]v &=& C(A+B^aQ)v+\Gamma^a Wv\nonumber\\
&=& \lambda Cv - \lambda \Gamma^ay^*\nonumber\\
&=& 0\label{eq:sufy}
\end{eqnarray}
and
\begin{eqnarray}
&&[(A-KCA)+\begin{bmatrix}B^a-KCB^a&-K\Gamma^a\end{bmatrix}\begin{bmatrix}Q\\W\end{bmatrix}]v \nonumber\\
&=& (A+B^aQ)v-K[C(A+B^aQ)v+\Gamma^a Wv]\nonumber\\
&=& \lambda v.\label{eq:sufx}
\end{eqnarray}

Till now, we know that the unstable eigenvector $v$ of $[(A-KCA)+\begin{bmatrix}B^a-KCB^a&-K\Gamma^a\end{bmatrix}\begin{bmatrix}Q\\W\end{bmatrix}]$ satisfies the condition 1)-2) in Definition~\ref{defn:zero} for system \eqref{eq:cominv3}-\eqref{eq:cominv4} (the same as system \eqref{eq:deltE}-\eqref{eq:deltzk} for $\triangle e_t$ and $\triangle z_t$).

Combining with that $v$ is reachable for $(A-KCA,\begin{bmatrix}B^a-KCB^a&-K\Gamma^a\end{bmatrix})$, the condition 3) in Definition~\ref{defn:zero} is satisfied. Thus, based on the Lemma~\ref{lem:suf}, the system \eqref{eq:cominv3}-\eqref{eq:cominv4} can be destabilized by a stealthy input, i.e., the system \eqref{eq:ss1}-\eqref{eq:ss2} is vulnerable.  

{\bf Necessity:} We need to show that vulnerability implies conditions 1)-3).

Note that $B^a$ and $\Gamma^a$ are both full column rank. Recall the proof of Theorem~\ref{thm:stable1}, we have already proved that $\begin{bmatrix}B^a-KCB^a &-K\Gamma^a\\CB^a &\Gamma^a\end{bmatrix}$ has full column rank, i.e., 
$$\ker\begin{bmatrix}B^a-KCB^a&-K\Gamma^a\end{bmatrix}\cap \ker\begin{bmatrix}CB^a&\Gamma^a\end{bmatrix} = \emptyset$$
for any matrix $K$.

We will separate the rest of the proof into two cases:

1) Suppose  the system \eqref{eq:cominv3}-\eqref{eq:cominv4} is non-invertible:

Following the Lemma~\ref{lem:noninto}, there exists a vector $\upsilon$ and gain matrix $\Psi$ such that

(a) $\upsilon$ is an unstable eigenvector of the following matrix
\begin{eqnarray}
A-KCA+\begin{bmatrix}B^a-KCB^a&-K\Gamma^a\end{bmatrix}\Psi.\label{eq:e1}
\end{eqnarray}

(b) $\upsilon$ satisfies that
\begin{eqnarray}
[CA+\begin{bmatrix}CB^a&\Gamma^a\end{bmatrix}\Psi]\upsilon = 0.\label{eq:e2}
\end{eqnarray}

(c) $\upsilon$ is reachable for $(A-KCA,\begin{bmatrix}B^a-KCB^a&-K\Gamma^a\end{bmatrix})$.

2) Suppose that the system in \eqref{eq:cominv3} and \eqref{eq:cominv4} is invertible:

Based on the result in Lemma~\ref{lem:nes}, (a)-(c) also holds.

Let $\Psi = \begin{bmatrix}Q\\W\end{bmatrix}$, from \eqref{eq:e2}, we have
\begin{eqnarray}
[CA+CB^aQ+\Gamma^aW]\upsilon = 0.\label{eq:e3}
\end{eqnarray}

Combining \eqref{eq:e3} with the fact that $\upsilon$ is an unstable eigenvector of \eqref{eq:e1}, we have
\begin{eqnarray}
&&[A-KCA+B^aQ-KCB^aQ-K\Gamma^aW]\upsilon \nonumber\\
&=& (A+B^aQ)\upsilon - K(CA+CB^aQ+\Gamma^aW)\upsilon \nonumber\\
&=& (A+B^aQ)\upsilon = \lambda \upsilon.\label{eq:e4}
\end{eqnarray}

Therefore, $\upsilon$ is also an unstable eigenvector of $A+B^aQ$. Recall that \eqref{eq:e3} implies
$$
\lambda C\upsilon = -\Gamma^aW\upsilon.
$$

That is, $C\upsilon\in \text{span}(\Gamma^a)$. Hence, conditions 1)-3) all hold.
\end{proof}
\begin{rem}
It is worth to note that the value of $\delta>0$ in stealthy condition \eqref{eq:detect2} is independent of the vulnerability condition. This is due to the linearity of the system. The adversary can always scale its attack to make $\Delta z_t$ arbitrarily small, while making $\Delta e_t$ diverge.
\end{rem}
\begin{rem}
Following Definition~\ref{defn:zero} and Lemmas~\ref{lem:suf}-\ref{lem:nes}, the vulnerability condition in Theorem~\ref{thm:stable} can be divided into three parts:

1) The existence of a non-trivial {\em output-nulling invariant subspace} for system \eqref{eq:cominv3}-\eqref{eq:cominv4};

2) There exists a unstable eigenvector $v$ of $(A+B^aQ)$ belonging to the above {\em output-nulling invariant subspace};

3) This unstable eigenvector $v$ is reachable for $(A-KCA,\begin{bmatrix}B^a-KCB^a&-K\Gamma^a\end{bmatrix})$.

The existence of {\em output-nulling invariant subspace} in 1) can be checked through the structural knowledge of linear system~\cite{Dion2003}. On the other hand, conditions 2)-3) are not generic properties and cannot be evaluated using structural only information. Hence, the structural information about the system can provide a necessary condition on whether the system is vulnerable.
\end{rem}

\section{A Performance Bound for Invulnerable System}\label{sec:bound}
Following the necessary and sufficient condition for vulnerability in Theorem~\ref{thm:stable}, we understand that the bias $\triangle {e}_t$ between the healthy and an attacked systems is bounded when the condition is not satisfied. In this section, focusing on the invulnerable system, we will give a performance bound for the bias $\triangle {e}_t$ under stealthy attacks.

Recall that in \eqref{eq:deltE} and \eqref{eq:deltzk}, letting $\zeta_t=\begin{bmatrix}u^a_{t}\\ y^a_{t+1}\end{bmatrix}$, we have
\begin{eqnarray*}
\triangle z_{t+1} = CA\triangle e_{t}+\begin{bmatrix}CB^a&\Gamma^a\end{bmatrix}\zeta_t,
\end{eqnarray*}
and
\begin{eqnarray}
\triangle e_{t+1} = (I-KC)A\triangle e_{t} + \begin{bmatrix}(I-KC)B^a&-K\Gamma^a\end{bmatrix}\zeta_t. \nonumber
\end{eqnarray}

By taking the $z$-transformation on both $\triangle z_{t+1}$ and $\triangle e_{t+1}$, it deduces
\begin{eqnarray}
\triangle e(z) &=& (zI-(I-KC)A)^{-1}\nonumber\\
&&\cdot\begin{bmatrix}(I-KC)B^a&-K\Gamma^a\end{bmatrix}\zeta(z) \nonumber\\
&=& T(z)\zeta(z)\label{eq:zT}
\end{eqnarray}
and
\begin{eqnarray}
\triangle z(z) &=& [CA(zI-(I-KC)A)^{-1}\nonumber\\
&&\cdot\begin{bmatrix}(I-KC)B^a&-K\Gamma^a\end{bmatrix}+\begin{bmatrix}CB^a&\Gamma^a\end{bmatrix}]\zeta(z) \nonumber\\
&=& S(z)\zeta(z),\label{eq:zz}
\end{eqnarray}
where $T(z)=(zI-(I-KC)A)^{-1}\begin{bmatrix}(I-KC)B^a&-K\Gamma^a\end{bmatrix}$ and $S(z)=CA(zI-(I-KC)A)^{-1}\begin{bmatrix}(I-KC)B^a&-K\Gamma^a\end{bmatrix}+\begin{bmatrix}CB^a&\Gamma^a\end{bmatrix}$.

Before the main result, we firstly introduce some new definitions to facilitate the analysis.
\begin{defn}
For a vector or matrix sequence $\{\xi_{t}:t\in\mathbb{N}\}$, we
use $\xi$ to denote the whole sequence. If $\xi$ is a vector sequence,
we define
\[
\left\Vert \xi\right\Vert _{\infty,2}=\sup_{t\in\mathbb{N}}\left\Vert \xi_{t}\right\Vert _{2},
\]
and if it is a matrix sequence, we define
\[
\left\Vert \xi\right\Vert _{1,\mathrm{sp}}=\sum_{t\in\mathbb{N}}\left\Vert \xi_{t}\right\Vert _{\mathrm{sp}},
\]
where $\left\Vert \cdot\right\Vert _{\mathrm{sp}}$ denotes the spectral norm, i.e.,
\[
\left\Vert \xi_{t}\right\Vert _{\mathrm{sp}}=\sup_{\|x\|_2=1}\left\Vert \xi_{t}x\right\Vert.
\]
\end{defn}
Then, a lemma is required by the proof of main result in this section.
\begin{lem}\label{lem:boundlem}
Suppose that the system in \eqref{eq:ss1}-\eqref{eq:ss2} is invulnerable and the attack $\{y_t^a,u_t^a:t\in \mathbb{N}\}$ is stealthy, then we have
\begin{equation}
\mathrm{ker}\left(T(z)\right)\supseteq\mathrm{ker}\left(S(z)\right),\text{ for all }\left|z\right|=1\label{eq:cond}
\end{equation}
and
\begin{eqnarray}
\left\Vert R\right\Vert _{1,\mathrm{sp}}<\infty,\label{eq:R1}
\end{eqnarray}
where $\|R\|_{1,sp}=\sum_{s\in \mathbb{N}}\|R_s\|_{sp}$, $R_s=\mathcal{Z}^{-1}(R(z))$, $R(z)=T(z)S^{\dag}(z)$ and $\mathcal{Z}^{-1}(\cdot)$ is the inverse $z$-transformation.
\end{lem}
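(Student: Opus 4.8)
The plan is to treat the two assertions separately, exploiting invulnerability in the frequency domain through the transfer functions $T$ and $S$ defined in \eqref{eq:zT}--\eqref{eq:zz}.

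For the kernel inclusion \eqref{eq:cond}, I would argue by contraposition: if it fails the system is vulnerable, contradicting the hypothesis. Suppose there were $z_0$ with $|z_0|=1$ and a vector $\zeta_0$ such that $S(z_0)\zeta_0=0$ but $T(z_0)\zeta_0\neq 0$. Because $A-KCA$ is stable, both $T$ and $S$ are analytic on a neighborhood of the unit circle, so near $z_0$ one has $S(z)\zeta_0=S'(z_0)\zeta_0(z-z_0)+O((z-z_0)^2)$. I would then inject the \emph{resonant, linearly growing} attack $\zeta_t=\mathrm{Re}(t\,z_0^t\,\zeta_0)$ (pairing $\zeta_0$ with its conjugate so that $u^a_t,y^a_{t+1}$ are real; note that no boundedness constraint is placed on the attack signals). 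Its $z$-transform has a double pole at $z_0$. Multiplying by $S$ removes one order of this pole (since $S(z_0)\zeta_0=0$), leaving at most a simple pole on the circle, so $\Delta z_t$ stays bounded and, after scaling, the attack is stealthy; multiplying by $T$ keeps the double pole (since $T(z_0)\zeta_0\neq 0$), forcing $\Delta e_t\sim t\,z_0^t\to\infty$. This contradicts \eqref{eq:invulner1}, so \eqref{eq:cond} must hold.

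For the bound \eqref{eq:R1} I would first use \eqref{eq:cond} to justify the factorization. Since $S^\dagger(z)S(z)$ is the orthogonal projector onto $\ker(S(z))^\perp$ and $\ker S(z)\subseteq\ker T(z)$ on the unit circle, it follows that $R(z)S(z)=T(z)S^\dagger(z)S(z)=T(z)$ for $|z|=1$, hence $\Delta e(z)=R(z)\Delta z(z)$ there, which is exactly the relation needed for a performance bound once $R$ is shown to be a stable convolution kernel. It then remains to prove that $R_s=\mathcal Z^{-1}(R)$ is absolutely summable. Here $T(z)$ has all its poles at eigenvalues of $A-KCA$, strictly inside the unit disk, so it is analytic on $|z|=1$; the only candidate singularities of $R$ on the circle arise at the points where $S(z)$ falls below its normal column rank, where $S^\dagger$ blows up.

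Showing these singularities are removable is the step I expect to be the main obstacle. At a rank-drop point $z_0$ on the circle, the directions in which $S^\dagger(z)$ amplifies are the right singular vectors of $S(z)$ whose singular values vanish at $z_0$, and their limits lie in $\ker S(z_0)\subseteq\ker T(z_0)$ by \eqref{eq:cond}; thus $T$ annihilates precisely the directions $S^\dagger$ amplifies. To make this quantitative I would use a local analytic (analytic-SVD or Smith--McMillan) factorization of $S$ near $z_0$ and verify that the order to which $T$ vanishes along $\ker S$ matches the order to which the offending singular values vanish, so that $R=TS^\dagger$ remains bounded. Establishing this order matching, and confirming that every rank drop of $S$ on $|z|=1$ is compensated — which is ultimately where invulnerability, via the absence of unstable reachable zero dynamics in Theorem~\ref{thm:stable}, enters — is the delicate part. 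Once $R$ is bounded with only removable singularities on the circle, it extends analytically to an annulus containing $|z|=1$, its coefficients $R_s$ decay geometrically, and $\sum_s\|R_s\|_{\mathrm{sp}}<\infty$ follows.
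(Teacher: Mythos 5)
Your handling of the kernel inclusion \eqref{eq:cond} is correct and takes a genuinely different route from the paper: the paper first extracts from invulnerability a uniform bound $M$ on the ratio $\limsup_t\|\Delta e_t\|_2/\limsup_t\|\Delta z_t\|_2$ and then contradicts it with a pure sinusoid $\zeta_t=e^{j\omega t}\mu$, whereas you inject the resonant input $t z_0^t\zeta_0$ and read the divergence of $\Delta e_t$ directly off the double pole while the pole of $\Delta z(z)$ drops to first order. Both arguments are sound (and both need the same real/imaginary-part fix for complex $z_0$, which the paper itself only spells out elsewhere); yours avoids introducing the ratio function, the paper's avoids pole-order bookkeeping.

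The genuine gap is in the second half. For \eqref{eq:R1} you reduce the problem to showing that the singularities of $R(z)=T(z)S^{\dag}(z)$ at the rank-drop points of $S$ on the unit circle are removable, and you explicitly leave the key ``order matching'' step unproven. This is not a routine verification, and it cannot follow from \eqref{eq:cond} alone: the pointwise inclusion $\ker S(z_0)\subseteq\ker T(z_0)$ controls only first-order vanishing of $T$ along the kernel directions, while a singular value of $S$ may vanish to order $k\geq 2$ at $z_0$, leaving $T(z)v(z)/\sigma(z)\sim(z-z_0)^{1-k}$ unbounded; so invulnerability must be invoked a second time, and your pointer to Theorem~\ref{thm:stable} is not developed into an argument. (There is also a secondary issue: the pointwise Moore--Penrose pseudoinverse is not an analytic function of $z$ where the rank drops, so ``bounded with removable singularities, hence analytic on an annulus'' needs justification.) The paper closes this step by a completely different, time-domain contradiction that bypasses the local analysis: it first shows $\left\Vert SS^{\dagger}\right\Vert_{1,\mathrm{sp}}<\infty$ by exhibiting $S(z)S^{\dagger}(z)$ as a rational projector $U(z)\,\mathrm{diag}(I_{\tilde m},0)\,U^{*}(z)$, and then argues that if $\sum_t\left\Vert R_t\right\Vert_{\mathrm{sp}}=\infty$ one can pick unit vectors $\eta_t$ with $\left\Vert R_t\eta_t\right\Vert_2=\left\Vert R_t\right\Vert_{\mathrm{sp}}$, a coordinate $d$ with $\sum_t\left|\left(R_t\eta_t\right)_d\right|=\infty$, and signs $\sigma_t$, and feed the bounded sequence $\xi_t=\sigma_{T-t}\eta_{T-t}$ through the attack $\zeta=S^{\dagger}\xi$: then $\Delta z=SS^{\dagger}\xi$ stays bounded (hence stealthy after scaling) while $\Delta e_T=\sum_t\sigma_t R_t\eta_t$ diverges, contradicting invulnerability. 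To complete your proposal you would either have to carry out the order-matching analysis in full, including where invulnerability enters it, or switch to a duality/contradiction argument of this kind.
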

\begin{proof}
See Appendix~\ref{app:bound}.
\end{proof}
%
%
%
Based on the properties for invulnerable system in Lemma~\ref{lem:boundlem}, a bound for $\triangle {e}_{k}$ is given in Theorem~\ref{thm:universal}.
\begin{thm}\label{thm:universal}
 Suppose that the system in \eqref{eq:ss1}-\eqref{eq:ss2} is invulnerable and the attack $\{y_t^a,u_t^a:t\in \mathbb{N}\}$ is stealthy, for any $k\in \mathbb{N}$, we have
\begin{eqnarray}
\|\triangle {e}_{k}\|_2 \leq \|R\|_{1,sp}\delta,
\end{eqnarray}
where $\delta$ is the stealthy bound for the residue defined in \eqref{eq:detect2}.
\end{thm}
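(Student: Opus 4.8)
The plan is to convert the frequency-domain information supplied by Lemma~\ref{lem:boundlem} into an exact factorization on the unit circle, pass that factorization to the time domain as a convolution, and then bound the convolution directly using the stealthiness of the residue. First I would exploit the kernel inclusion $\ker(S(z))\subseteq\ker(T(z))$ for all $|z|=1$ given in \eqref{eq:cond}. Since $S^{\dagger}(z)S(z)$ is the orthogonal projector onto $(\ker S(z))^{\perp}$, every input vector decomposes as $\zeta=\zeta_{1}+\zeta_{2}$ with $\zeta_{1}\in(\ker S(z))^{\perp}$ and $\zeta_{2}\in\ker S(z)$; the inclusion forces $T(z)\zeta_{2}=0$, so $T(z)\zeta=T(z)\zeta_{1}=T(z)S^{\dagger}(z)S(z)\zeta=R(z)S(z)\zeta$. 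As this holds for all $\zeta$, we obtain the pointwise identity $T(z)=R(z)S(z)$ for every $|z|=1$.

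Next I would substitute this identity into the transfer relation \eqref{eq:zT} and use \eqref{eq:zz}, which yields, on the unit circle,
\[
\triangle e(z)=T(z)\zeta(z)=R(z)S(z)\zeta(z)=R(z)\triangle z(z).
\]
Because Lemma~\ref{lem:boundlem} also guarantees $\left\Vert R\right\Vert _{1,\mathrm{sp}}<\infty$ via \eqref{eq:R1}, the impulse response $\{R_{s}\}=\{\mathcal{Z}^{-1}(R(z))\}$ is absolutely summable in the spectral norm, so the product $R(z)\triangle z(z)$ is the transform of the corresponding convolution and the interchange of summation and transform is legitimate. Inverting therefore gives
\[
\triangle e_{k}=\sum_{s}R_{s}\,\triangle z_{k-s}.
\]

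Finally I would estimate this sum directly. Applying the triangle inequality, the sub-multiplicativity of the spectral norm, and the stealthy bound $\left\Vert \triangle z_{k-s}\right\Vert _{2}\le\delta$ from \eqref{eq:detect2} gives
\[
\left\Vert \triangle e_{k}\right\Vert _{2}\le\sum_{s}\left\Vert R_{s}\right\Vert _{\mathrm{sp}}\left\Vert \triangle z_{k-s}\right\Vert _{2}\le\delta\sum_{s}\left\Vert R_{s}\right\Vert _{\mathrm{sp}}=\left\Vert R\right\Vert _{1,\mathrm{sp}}\,\delta,
\]
which is exactly the claimed bound and holds uniformly in $k$.

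The step I expect to be the main obstacle is the first one: justifying the pointwise factorization $T(z)=R(z)S(z)$ and ensuring that $R(z)=T(z)S^{\dagger}(z)$ is well behaved. The pseudoinverse $S^{\dagger}(z)$ can be discontinuous at frequencies where the rank of $S(z)$ drops, so one must verify that $R(z)$ is nonetheless the transform of a sequence that is absolutely summable in the spectral norm (and, for the index set of the final sum to be $\mathbb{N}$, that it is causal). This is precisely the content secured by Lemma~\ref{lem:boundlem} through \eqref{eq:cond} and \eqref{eq:R1}, so the remaining work in this theorem is mainly to assemble the factorization, invoke the convolution theorem, and carry out the elementary norm estimate.
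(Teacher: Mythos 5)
Your proposal is correct and follows essentially the same route as the paper's proof: both use the kernel inclusion of Lemma~\ref{lem:boundlem} to obtain $\triangle e(z)=T(z)S^{\dagger}(z)S(z)\zeta(z)=R(z)\triangle z(z)$ on the unit circle, pass to the time-domain convolution $\triangle e_{k}=\sum_{s}R_{s}\triangle z_{k-s}$, and bound it by $\left\Vert R\right\Vert _{1,\mathrm{sp}}\delta$ via the triangle inequality and the stealthy bound on $\triangle z$. Your justification of the identity $T(z)=R(z)S(z)$ through the orthogonal decomposition $\zeta=\zeta_{1}+\zeta_{2}$ is in fact stated slightly more carefully than in the paper (which loosely describes $S^{\dagger}(z)S(z)$ as the projection onto $\ker(S(z))$ rather than onto its orthogonal complement), but the substance of the argument is identical.
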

\begin{proof}
Based on the Lemma~\ref{lem:boundlem}, it follows from \eqref{eq:cond} that
\[
\mathrm{ker}\left(T(z)\right)\supseteq\mathrm{ker}\left(S(z)\right),\text{ for all }\left|z\right|=1.
\]

It is known that $T^{\dag}(z)T(z)$ is the projection onto $\ker(T(z))$ and $S^{\dag}(z)S(z)$ is the projection onto $\ker(S(z))$.
For that $\mathrm{ker}\left(T(z)\right)\supseteq\mathrm{ker}\left(S(z)\right)$, we can pre-multiply $T^{\dag}(z)T(z)$ by $S^{\dag}(z)S(z)$ without changing the result.

Then, from \eqref{eq:zT} and \eqref{eq:zz}, we have
\begin{eqnarray}
\triangle e(z) &=& T(z)\zeta(z) = T(z)T^{\dag}(z)T(z)\zeta(z)\nonumber\\
&=& T(z)T^{\dag}(z)T(z)S^{\dag}(z)S(z)\zeta(z) \nonumber\\
&=& T(z)S^{\dag}(z)(S(z)\zeta(z))\nonumber\\
&=& R(z)\triangle z(z).\label{eq:Ker-Ker}
\end{eqnarray}

From \eqref{eq:Ker-Ker}, it follows that,
\[
\triangle e=TS^{\dag}\triangle z=R\triangle z,
\]
or equivalently,
\begin{equation}
\triangle e_{t}=\sum_{s\in\mathbb{Z}}R_{s}\triangle z_{t-s}.\nonumber
\end{equation}

Thus, combining with the boundness of $\left\Vert R\right\Vert_{1,\mathrm{sp}}$ in Lemma~\ref{lem:boundlem}, for all $k\in\mathbb{Z}$,
\begin{eqnarray*}
\left\Vert \triangle e_{k}\right\Vert_{2}&\leq&\sum_{s\in\mathbb{Z}}\left\Vert R_{s}\right\Vert _{\mathrm{sp}}\left\Vert \triangle z_{k-s}\right\Vert _{2}\leq\left\Vert R\right\Vert_{1,\mathrm{sp}}\left\Vert \triangle z\right\Vert_{\infty,2} \\
&\leq& \left\Vert R\right\Vert_{1,\mathrm{sp}}\delta,
\end{eqnarray*}
and it completes the proof.
\end{proof}
\begin{rem}
From \eqref{eq:zT} and \eqref{eq:zz}, the quantity $T(z)S^{\dag}(z)$ can be roughly viewed as a transition function from $\triangle z(z)$ to $\triangle e(z)$. 
\end{rem}
\begin{rem}
Since
\begin{eqnarray*}
\triangle \hat{x}_{t+1} = (A+BL)\triangle \hat{x}_t + K\triangle z_{t+1},
\end{eqnarray*}
combining with that $A+BL$ is stable and $\Delta z_k$ is bounded due to the stealthy requirement in \eqref{eq:detect2}, we can easily get the bound of $\|\triangle \hat{x}_{t}\|_2$. Thus, the bound of $\|\triangle x_t\|_2$ follows from
\begin{eqnarray*}
\|\triangle x_t\|_2 &\leq& \|\triangle \hat{x}_t\|_2 + \|\triangle e_t\|_2.
\end{eqnarray*}
\end{rem}

\section{Simulation}\label{sec:simulate}

In this section, numerical examples are given to verify our proposed results in Theorem~\ref{thm:stable1}-\ref{thm:universal}. We first consider a double integrator from \cite{mo2016performance} below:
\begin{eqnarray}
x_{t+1} &=& \begin{bmatrix}1&0\\ 1&1\end{bmatrix}x_t + \begin{bmatrix}1\\ 0\end{bmatrix}u_t + B^au_t^a + w_t,\nonumber\\
y_t &=& x_t + \Gamma^a y_t^a + v_t.\label{eq:exam1}
\end{eqnarray}
The stationary estimator gain is given by $K=\begin{bmatrix}0.6&0\\ -1.4&1.6\end{bmatrix}$. The attack is stealthy if
\begin{eqnarray*}
\|\triangle z_t\|\leq 1
\end{eqnarray*}
for all $t\in \mathbb{N}$, and strictly stealthy if
\begin{eqnarray*}
\|\triangle z_t\|=0.
\end{eqnarray*}

In the rest of section, we will show that the different choices of sensor attack matrix $\Gamma^a$ and actuator attack matrix $B^a$ will make the system vulnerable, strictly vulnerable or invulnerable.

1) The attack matrices $B^a=\begin{bmatrix}1\\ 0\end{bmatrix},\Gamma^a=\begin{bmatrix}1&0\\0&1\end{bmatrix}$:

Following Theorem~\ref{thm:stable1}, the system is strictly vulnerable. Thus, a strictly stealthy attack sequence is designed by \eqref{eq:striv} and its effects on the norms of $\triangle e_t$ and $\triangle z_t$ are shown in Figure~\ref{strict_vulnerable}.
\begin{figure}[ht]
\begin{centering}
\includegraphics[width=8cm]{strict_vulnerable}
\par\end{centering}
\caption{The evolution of $\triangle e_t$ and $\triangle z_t$}
\label{strict_vulnerable}
\end{figure}

From Figure~\ref{strict_vulnerable}, the system is destabilized by an attack while the residue bias is always zero, which confirms the strict vulnerability criterion in Theorem~\ref{thm:stable1}.

2) The attack matrices $B^a=0,\Gamma^a=\begin{bmatrix}0\\1\end{bmatrix}$:

Following Theorems~\ref{thm:stable1} and~\ref{thm:stable}, the system is vulnerable but not strictly vulnerable. A stealthy attack sequence is designed by \eqref{eq:unstableattack2}, and the norms of $\triangle e_t$ and $\triangle z_t$ are plotted in Figure~\ref{attack_diverge}.
\begin{figure}[ht]
\begin{centering}
\includegraphics[width=8cm]{vulnerable}
\par\end{centering}
\caption{The evolution of $\triangle e_t$ and $\triangle z_t$.}
\label{attack_diverge}
\end{figure}

From Figure~\ref{attack_diverge}, the estimation error bias between the healthy and attacked system diverges while the residual bias is kept bounded. This confirms the vulnerability criterion in Theorem~\ref{thm:stable1}.

3) The attack matrices $B^a=0,\Gamma^a=\begin{bmatrix}1\\0\end{bmatrix}$:

Following Theorem~\ref{thm:stable}, the system is invulnerable for all stealthy attacks. We plot the reachable set for $\triangle e_t$ under all possible stealthy attacks in Figure~\ref{reach_invulner2_uni2} to show the system's robustness. Moreover, in the same figure, we also curve the universal bound for $\triangle e_t$ from Theorem~\ref{thm:universal}.
\begin{figure}[ht]
\begin{centering}
\includegraphics[width=8cm]{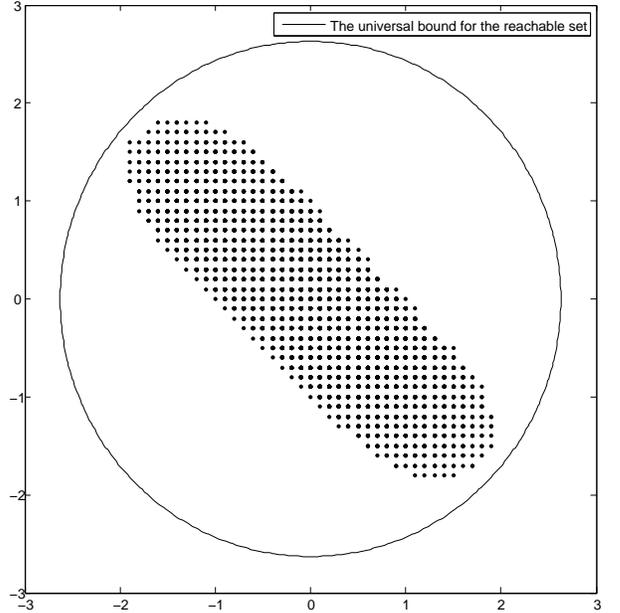}
\par\end{centering}
\caption{The reachable set of $\triangle e_t$ and its universal bound.}
\label{reach_invulner2_uni2}
\end{figure}

The dots in Figure~\ref{reach_invulner2_uni2} make up the reachable set of $\triangle e_t$. From Figure~\ref{reach_invulner2_uni2}, the estimation error bias for invulnerable system is always bounded under all stealthy attacks and its universal bound from Theorem~\ref{thm:universal} is tight and effective.

The system in \eqref{eq:exam1} is simply designed to illustrate the strict/non-strict vulnerability and invulnerability properties. To show our analysis for practical system, we have introduced the well-known Tennessee Eastman Process for further simulation. Tennessee Eastman Process (TEP) is a commonly used process proposed by Downs and Vogel in~\cite{downs1993plant}. In this simulation, we adopt a simplified version of TEP from~\cite{ricker1993model}, as follows:
\begin{align}
\dot{x} &= Ax+ Bu + B^au^a + w,\nonumber\\
y &= Cx + \Gamma^ay^a + v,\label{eq:exam2}
\end{align}
where $A,B$ and $C$ are constant matrices~\footnote{For more details about this dynamic model, please refer to Appendix~I in~\cite{ricker1993model}.}.

The TEP system is a MIMO system of order $n = 8$ with $p = 4$ inputs and $m = 10$ outputs. We discretize the system using the control system toolbox in MATLAB by selecting a sample period of $1$ second. Moreover, we take $B^a=\begin{bmatrix}0&0&1&0&0&0&0&0\end{bmatrix}^{\top}$ and $\Gamma^a=\begin{bmatrix}0&0&1&0&0&0&0&0&1&0\end{bmatrix}^{\top}$. Moreover, the covariance matrices $Q$ for $w$ and $R$ for $v$ are assumed to be identity matrices with proper dimensions.

Similar to that in Figure~\ref{reach_invulner2_uni2}, we compute the 8-dimensional reachable set of the TEP model and project it onto a 2-dimensional plant and show that our universal bound from Theorem~\ref{thm:universal} is still effective.
\begin{figure}[ht]
\begin{centering}
\includegraphics[width=8cm]{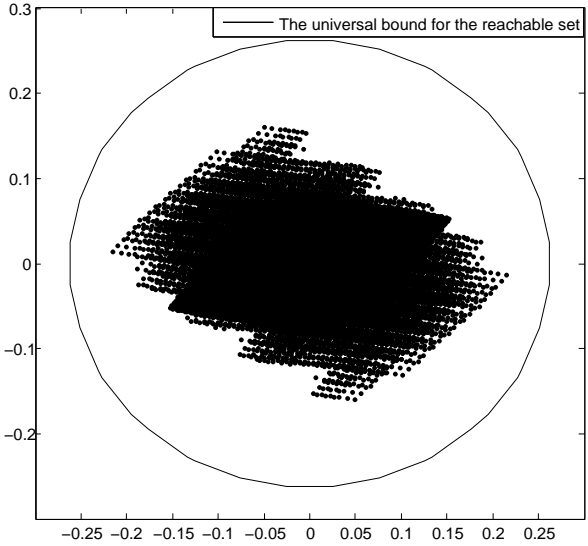}
\par\end{centering}
\caption{The reachable set of $\triangle e_t$ and its universal bound for the TEP Model.}
\label{TEP_reach}
\end{figure}


\section{Conclusion}\label{sec:conclude}
In this paper, the definitions of vulnerable and strictly vulnerable systems have been given for a stochastic linear system. A system is strictly vulnerable means that it can be destabilized by an attack that have no influence on the residue. Meanwhile, a system is vulnerable means that it can be destabilized by an attack that have bounded influence on the residue. The necessary and sufficient vulnerability and strict vulnerability conditions have been provided in this paper, which ensure the stability under stealthy and strictly stealthy attacks, respectively. Furthermore, for an invulnerable system, a performance bound for the bias between healthy and attacked system has also been given. The vulnerability condition shows what kind of system is robust to stealthy attacks and the performance bound shows how performance is affected by the stealthy attacks.

\appendices
\section{Proof of Lemma~\ref{lem:eq}}\label{app:eq}
{\bf Necessity:} The system in \eqref{eq:standard} being not invertible means that there exist a nonzero input $\{u_k: k\in \mathbb{N}\}$ such that $x_0=0,y_k=0$ for all $k\in \mathbb{N}$. Then, recall that in \eqref{eq:standardK}, we have
\begin{eqnarray}
x_{k+1}'&=&(A+KC)x_k'+(B+KD)u_k'\nonumber\\
&=&Ax_k'+Bu_k'+Ky_k'.\label{eq:attxy}
\end{eqnarray}

This implies that taking $u_k'=u_k$ for all $k\in \mathbb{N}$ will make $x_0'=0,y_k'=0$ for all $k\in \mathbb{N}$. Thus, the system in \eqref{eq:standardK} is also not invertible.

{\bf Sufficiency:} Suppose that there exist a nonzero input $\{u_k': k\in \mathbb{N}\}$ such that $x_0'=0,y_k'=0$. Recalling \eqref{eq:attxy}, we have
\begin{eqnarray*}
y_k' &=& Cx_k'+Du_k'=0,\\
x_{k+1}' &=& (A+KC)x_k'+(B+KD)u_k'\\
&=& Ax_k'+Bu_k'+Ky_k'=Ax_k'+Bu_k'.
\end{eqnarray*}

This will make $x_k=x_k',y_k=0$ in \eqref{eq:standardK} for all $k$ by letting $u_k=u_k'$ for all $k$, i.e., the system in \eqref{eq:standard} is not invertible.

\section{Proof of Lemma~\ref{lem:invariant}}\label{app:invariant}
We will first prove the uniqueness of $u$ by contradiction. Suppose there exists $u\neq u' $, such that
\begin{eqnarray*}
&&Ax + Bu\in V^*,\,Cx+Du = 0,\\
&&Ax + Bu'\in V^*,\,Cx+Du' = 0.
\end{eqnarray*}
By linearity of the system, we have
\begin{align*}
  0 + B (u-u') = x_1 \in V^*, D(u-u') = 0.
\end{align*}

By the property of the invariant set, there exist $u_k$ ($k\geq 1$) and corresponding $x_k$ ($k\geq 2$), such that
\begin{align*}
  Ax_k + Bu_k = x_{k+1} \in V^*,\,Cx_k+Du_k = 0,\forall k \geq 1.
\end{align*}

As a result, the non-zero control input sequence $u_0 = u-u',\, u_1,\,u_2,\ldots$ results in zero output for the system, which contradicts with the assumption that the system is invertible.
Thus, $u$ satisfying \eqref{xxx} is unique for each $x\in V^*$.

Now we show the existence of $Q$. Suppose that the basis of $V^*$ is given by $\{x_1^*,x_2^*,\ldots,x_{n^*}^*\}$, then there exists a unique set $\{u_1^*,u_2^*,\ldots,u_{n^*}^*\}$ such that
\begin{align*}
  Ax_i^* + Bu_i^* &\in V^*, \\
  Cx_i^* + Du_i^* &= 0
\end{align*}
for any $i=1,2,\ldots,n^*$.

The matrix $Q$ is defined as a transition matrix from $\{x_1^*,x_2^*,\ldots,x_{n^*}^*\}$ to $\{u_1^*,u_2^*,\ldots,u_{n^*}^*\}$, i.e., $u_i^*=Qx_i^*$ for any $i=1,2,\ldots,n^*$.
Taking arbitrary $x=a_1x_1^*+\ldots+a_{n^*}x_{n^*}^*$, we have $u=a_1u_1^*+\ldots+a_{n^*}u_{n^*}^*$ such that
\begin{eqnarray*}
&&Ax + Bu\in V^*,\,Cx+Du = 0,
\end{eqnarray*}
and $u=Qx$ always hold.

\section{Proof of Lemma~\ref{lem:suf}}\label{app:suf}
Recall the notations in Definition~\ref{defn:zero}, since the unstable eigenvector $v$ is reachable for $(A,B)$, there exists $u_0,u_1,\ldots,u_{n-1}$ such that
$$x_n = v.$$

Moreover, we can manipulate the magnitude of $v$ to satisfy that $\|y_k\|\leq \delta$ for all $k=0,1,\ldots,n-1$.

Then, we separate the proof into two cases:

1) Suppose  $|\lambda|>1$. A sequence of input is designed as
\begin{eqnarray}
{u}_{k} = \lambda^{k-n}Qv\label{eq:unstableattack}
\end{eqnarray}
for any $k\geq n$.

Under the input designed above, it follows directly that ${x}_{k}=\lambda^{k-n}v$ and $y_k = 0$ for any $k\geq n$.

Hence, we have  $\limsup_{k\rightarrow \infty}\|{x}_k\|=\infty$ and $\|y_k\|\leq \delta$ for all $k\in \mathbb{N}$.

2) Suppose  $|\lambda|=1$. A sequence of input is designed as
\begin{eqnarray}
\acute{u}_{kn+j} = k\lambda^{nk+j-n}Qv+\lambda^{nk}u_j\label{eq:unstableattack2}
\end{eqnarray}
for any $k\in \mathbb{N}$ and $j=0,1,\ldots,n-1$.

We derive the expression of corresponding $\acute{x}_{kn+j}$ by induction.
Suppose  $\acute{x}_{kn+j} = k\lambda^{nk+j-n}v+\lambda^{nk}x_j$. Then,
\begin{eqnarray*}
\acute{x}_{kn+j+1} &=& A\acute{x}_{kn+j} + B\acute{u}_{kn+j}\\
&=& A[k\lambda^{nk+j-n}v+\lambda^{nk}x_j] \\
&&+ B[k\lambda^{nk+j-n}Qv+\lambda^{nk}u_j]\\
&=& k\lambda^{nk+j-n}(A+BQ)v + \lambda^{nk}(Ax_j+Bu_j)\\
&=& k\lambda^{nk+j+1-n}v+\lambda^{nk}x_{j+1}
\end{eqnarray*}
and
\begin{eqnarray*}
\acute{x}_{(k+1)n} &=& A\acute{x}_{kn+n-1} + B\acute{u}_{kn+n-1}\\
&=& A[k\lambda^{nk-1}v+\lambda^{nk}x_{n-1}] \\
&&+ B[k\lambda^{nk-1}Qv+\lambda^{nk}u_{n-1}]\\
&=& k\lambda^{nk-1}(A+BQ)v + \lambda^{nk}(Ax_{n-1}+Bu_{n-1})\\
&=& k\lambda^{nk}v+\lambda^{nk}v\\
&=& (k+1)\lambda^{(k+1)n+0-n}v + \lambda^{(k+1)n}x_0.
\end{eqnarray*}

Thus, $\acute{x}_{kn+j} = k\lambda^{nk+j-n}v+\lambda^{nk}x_j$ for any $k\in \mathbb{N}$ and $j=0,1,\ldots,n-1$ is proved. This further implies that
\begin{eqnarray*}
\acute{y}_{kn+j} &=& C\acute{x}_{kn+j} + D\acute{u}_{kn+j}\\
&=& C[k\lambda^{nk+j-n}v+\lambda^{nk}x_j]\\
&&+D[k\lambda^{nk+j-n}Qv+\lambda^{nk}u_j]\\
&=& k\lambda^{nk+j-n}[C+DQ]v + \lambda^{nk}[Cx_j+Du_j]\\
&=& \lambda^{nk}y_{j}.
\end{eqnarray*}

Since $|\lambda|=1$, we have that $\|\acute{y}_{kn+j}\|=\|y_j\|\leq \delta$ for any $k\in \mathbb{N}$ and $j=0,1,\ldots,n-1$. Then we conclude that $\limsup_{k\rightarrow \infty}\|\acute{x}_k\|=\infty$ and $\|\acute{y}_k\|\leq \delta$ for all $k\in \mathbb{N}$.

It is worth noting that the control input we have designed can be complex valued, as the eigenvalue and eigenvector of $A+BQ$ may be complex valued. However, by linearity, we know that if we inject the real (imaginary) part of the designed sequence $u_k$ instead, then the state will be corresponding to the real (imaginary) part of $x_k$. Therefore, the divergence under a complex input means that either the real or the imaginary part of input can cause the divergence of state. Thus, we can choose the real or the imaginary part of that input as a real value input to make the system unstable.

\section{Proof of Lemma~\ref{lem:noninto}}\label{app:nes1}
Since the system in \eqref{eq:standard} is non-invertible, there exists $T\in \mathbb{N}$ and a nonzero stealthy input sequence $[u_0=u_0^*,\ldots,u_T=u_T^*]$ with $u_0^*\neq 0$ such that $[x_1=x_1^*,\ldots,x_T=x_T^*,x_{T+1}=a_1x_1^*+\ldots+a_Tx_T^*]$ and $y_k=0$ for all $k\in \mathbb{N}$.

Then, with $u_0=0$, we have that
\begin{eqnarray*}
&&u_1=0,\ldots,u_{T-1}=0,u_T=-a_1u_0^*\\
&&\Rightarrow x_{T+1}=-a_1x_1^*\\
&&u_1=0,\ldots,u_{T-1}=-a_2u_0^*,u_T=-a_2u_1^*\\
&&\Rightarrow =x_{T+1}=-a_2x_2^*\\
&&\vdots\\
&&u_1=-a_Tu_0^*,\ldots,u_{T-1}=-a_Tu_{T-2}^*,u_T=-a_Tu_{T-1}^*\\
&&\Rightarrow x_{T+1}=-a_Tx_T^*.
\end{eqnarray*}

Let
$$\bar{u}_0=u_0^*,\bar{u}_1=u_1^*-a_Tu_0^*,\ldots,\bar{u}_T=u_T^*-\sum_{i=0}^{T-1}a_{i+1}u_i^*$$
and its corresponding state sequence is denoted by
$$\bar{x}_1,\bar{x}_2,\ldots,\bar{x}_{T+1}.$$

Based on the combination property of linear system, the input $[u_0=\bar{u}_0,u_1=\bar{u}_1,\ldots,u_T=\bar{u}_T]$ is nonzero and stealthy, which makes $x_{T+1}=\bar{x}_{T+1}=0$.

For any $\lambda\geq 1$, we take
$$x=\sum_{k=0}^T\lambda^{-k}\bar{x}_k,~u=\sum_{k=0}^T\lambda^{-k}\bar{u}_k$$
and there exists a matrix $Q$ such that
$$u=Qx.$$

Then,
\begin{eqnarray*}
(A+BQ)x&=&Ax+Bu=\sum_{k=0}^T\lambda^{-k}(A\bar{x}_k+B\bar{u}_k)\\
&=&\sum_{k=0}^T\lambda^{-k}\bar{x}_{k+1}=\sum_{k=1}^{T+1}\lambda^{-k+1}\bar{x}_{k}\\
&=&\lambda\sum_{k=0}^T\lambda^{-k}\bar{x}_{k}=\lambda x
\end{eqnarray*}
and
\begin{eqnarray*}
(C+DQ)x=Cx+Du&=&\sum_{k=0}^T\lambda^{-k}(C\bar{x}_k+D\bar{u}_k)=0.
\end{eqnarray*}

Combining with that $x=\sum_{k=0}^T\lambda^{-k}\bar{x}_k$ is reachable for $(A,B)$, the conditions for the {\em unstable reachable zero-dynamic} of \eqref{eq:standardQ} in Definition~\ref{defn:zero} are satisfied and it completes the proof.

\section{Proof of Lemma~\ref{lem:nes}}\label{app:nes2}
Firstly, we will show the boundness of $\frac{\|u_k\|}{\|x_k\|+1}$.

Since the system in \eqref{eq:standard} is invertible, based on the results in Corollary~1 and Lemma~1 of \cite{Sain1969}, we have
\begin{eqnarray*}
u_k = \sum_{i=0}^{n-1}P_i[y_{k+i}-CA^ix_k],
\end{eqnarray*}
where $P_i,i=0,1,\ldots,n-1$ are the gains to reconstruct the input through outputs.
Thus, it follows that
\begin{eqnarray*}
\frac{\|u_k\|}{\|x_k\|+1} &\leq& \sum_{i=0}^{n-1}\|P_i\|[\frac{\|y_{k+i}\|}{\|x_k\|+1}+\|CA^i\|\frac{\|x_k\|}{\|x_k\|+1}]\\
&\leq& \sum_{i=0}^{n-1}\|P_i\|(\delta+\|CA^i\|).
\end{eqnarray*}

Then, there exists $U>0$ such that $\frac{\|u_k\|}{\|x_k\|+1}\leq U$ for all $k\in \mathbb{N}$.

Denote the state $x_k$ under stealthy input sequence $\{u_{i,t}:t\in \mathbb{N}\}$ by $x_{i,k}$. Taking arbitrary $N\in \mathbb{N}$ and $P>0$, for the vulnerability of system \eqref{eq:standard}, we could choose a cluster $\{\{u_{i,t}:t\in \mathbb{N}\}:i=1,2,\ldots\}$ such that\footnote{The peak sequence in \eqref{eq:peak} can be designed by
\begin{eqnarray*}
i_0=0,i_{k+1}=\min\{j:\|x_j\|>\|x_{i_k}\|\}
\end{eqnarray*}
if there exist stealthy $\{u_t:t\in \mathbb{N}\}$ such that $\limsup_{k\rightarrow \infty}\|{x}_{k}\|=\infty$}
\begin{eqnarray}
\Vert x_{1,k_1}\Vert &>& P,\nonumber\\
\Vert x_{i+1,k_{i+1}+q}\Vert &>& (i+1)P,~\text{for all}~q=-N,\ldots,n,\nonumber\\
\Vert x_{i+1,k_{i+1}}\Vert &>& \max_{q=-N,\ldots,n}(\Vert x_{i,k_i}\Vert,\Vert x_{i+1,k_{i+1}+q}\Vert).\label{eq:peak}
\end{eqnarray}

And it follows directly that $\limsup_{i\rightarrow \infty}\|{x}_{i,k_i+q}\|=\infty$ for any $q=-N,\ldots,n$.
%
%

Based on the Bolzano-Weierstrass Theorem~\cite{Bartle2000real}, there exists a convergent subsequence for a bounded sequence. Since
\begin{eqnarray*}
\|\frac{u_{i,k}}{\|x_{i,k}\|+1}\|\leq U,\|\frac{x_{i,k}}{\|x_{i,k}\|+1}\|\leq 1,\forall k\in \mathbb{N},
\end{eqnarray*}
there exists a subsequence of $\{i:i\in \mathbb{N}\}$, i.e., $\{j_i:i\in \mathbb{N}\}\subseteq \{i:i\in \mathbb{N}\}$, such that
\begin{eqnarray*}
\lim_{i\rightarrow \infty}\frac{x_{j_i,k_{j_i}+q}}{\|x_{j_i,k_{j_i}+q}\|+1}=\check{x}_q,q=-N,\ldots,0,\ldots,n\\
\lim_{i\rightarrow \infty}\frac{u_{j_i,k_{j_i}+q}}{\|x_{j_i,k_{j_i}+q}\|+1}=\check{u}_q,q=-N,\ldots,0,\ldots,n.
\end{eqnarray*}

For any $q=-N,\ldots,n-1$, we have
\begin{eqnarray}
A\check{x}_q+B\check{u}_q&=&\lim_{i\rightarrow \infty}\frac{Ax_{j_i,k_{j_i}+q}+Bu_{j_i,k_{j_i}+q}}{\|x_{j_i,k_{j_i}+q}\|+1}\nonumber\\
&=& \lim_{i\rightarrow \infty}\frac{x_{j_i,k_{j_i}+q+1}}{\|x_{j_i,k_{j_i}+q}\|+1}\nonumber\\
&=& \underbrace{\lim_{i\rightarrow \infty}\frac{\|x_{j_i,k_{j_i}+q+1}\|+1}{\|x_{j_i,k_{j_i}+q}\|+1}}_{c_q}\check{x}_{q+1},\label{eq:checkx}
\end{eqnarray}
and
\begin{eqnarray}
C\check{x}_q+D\check{u}_q&=&\lim_{i\rightarrow \infty}\frac{Cx_{j_i,k_{j_i}+q}+Du_{j_i,k_{j_i}+q}}{\|x_{j_i,k_{j_i}+q}\|+1}\nonumber\\
&=& \lim_{i\rightarrow \infty}\frac{y_{j_i,k_{j_i}+q}}{\|x_{j_i,k_{j_i}+q}\|+1}\nonumber\\
&=& 0\label{eq:checky}
\end{eqnarray}
as $\|y_{j_i,k_{j_i}+q}\|\leq \delta$ and $\limsup_{i\rightarrow \infty}\|x_{j_i,k_{j_i}+q}\|=\infty$.

Since the state $x_k$ is $n$-dimensional, the vectors $\check{x}_{-N},\ldots,\check{x}_{0},\ldots,\check{x}_{n}$ are linearly dependent. Then, a subspace $V$ is designed by
\begin{eqnarray*}
V = \text{span}[\check{x}_{-N},\ldots,\check{x}_{0},\ldots,\check{x}_{d}],
\end{eqnarray*}
where $0\leq d\leq n-1$ such that $\text{span}[\check{x}_{-N},\ldots,\check{x}_{0},\ldots,\check{x}_{d}]=\text{span}[\check{x}_{-N},\ldots,\check{x}_{0},\ldots,\check{x}_{d+1}]$, i.e., $\check{x}_{d+1}\in V$.

Then, for any $\check{x}\in V$ and let $\check{x}= b_{-N}\check{x}_{-N}+\ldots+b_{d}\check{x}_{d}$, we have
\begin{eqnarray*}
&&A\check{x} + B[b_{-N}\check{u}_{-N}+\ldots+b_{d}\check{u}_{d}]\\
&=& b_{-N}[A\check{x}_{-N}+B\check{u}_{-N}]+\ldots+b_{d}[A\check{x}_{d}+B\check{u}_{d}]\\
&=& b_{-N}c_{-N}\check{x}_{-N+1}+\ldots+b_{d}c_d\check{x}_{d+1}\in V
\end{eqnarray*}
and
\begin{eqnarray*}
&&C\check{x}+D[b_{-N}\check{u}_{-N}+\ldots+b_{d}\check{u}_{d}]\\
&=& b_{-N}[C\check{x}_{-N}+D\check{u}_{-N}]+\ldots+b_{d}[C\check{x}_{d}+D\check{u}_{d}]\\
&=& 0.
\end{eqnarray*}

Thus the subspace $V$ is an invariant set satisfying \eqref{eq:invariant1}. Combining with that $\check{x}_{-N},\ldots,\check{x}_{d}$ are reachable for $(A,B)$, we have $V\subseteq V^*$.

Based on the Lemma~\ref{lem:invariant}, since $\check{x}_{-N},\ldots,\check{x}_{0}\in V^*$, it follows that there exists matrix $Q$ such that
\begin{eqnarray}\label{eq:checkN}
\check{u}_{q}=Q\check{x}_{q}, \forall q=-N,\ldots,0.
\end{eqnarray}

At last, we will show that $A+BQ$ is unstable on $V^*$ by contradiction. Suppose that $A+BQ$ is stable on $V^*$, for that $N$ can be arbitrary large, there exists an integer $p\leq N$ such that
\begin{eqnarray}
\|(A+BQ)^p\upsilon\|<\upsilon\label{eq:unstableass}
\end{eqnarray}
for all $\upsilon\in V^*$.

From \eqref{eq:checkN}, it follows that
\begin{eqnarray*}
A\check{x}_{q}+B\check{u}_q &=& \lim_{i\rightarrow \infty}\frac{\|x_{j_i,k_{j_i}+q+1}\|+1}{\|x_{j_i,k_{j_i}+q}\|+1}\check{x}_{q+1}\\
&=& (A+BQ)\check{x}_{q}.
\end{eqnarray*}
for all $q=-p,-p+1,\ldots,-1$.

The above further implies that
\begin{eqnarray}
&&\lim_{i\rightarrow \infty}\frac{\|x_{j_i,k_{j_i}}\|+1}{\|x_{j_i,k_{j_i}-p}\|+1}\check{x}_{0} \nonumber\\
&=& \lim_{i\rightarrow \infty}\frac{\|x_{j_i,k_{j_i}}\|+1}{\|x_{j_i,k_{j_i}-1}\|+1}\cdots\frac{\|x_{j_i,k_{j_i}-p+1}\|+1}{\|x_{j_i,k_{j_i}-p}\|+1}\check{x}_{0}\nonumber\\
&=& (A+BQ)^p\check{x}_{-p}.\label{eq:itercheck}
\end{eqnarray}

Based on the peak property defined in \eqref{eq:peak}, we have that $\lim_{k\rightarrow \infty}\frac{\|x_{j_i,k_{j_i}}\|+1}{\|x_{j_i,k_{j_i}-p}\|+1}\geq 1$.

Hence, together with $\|\check{x}_{-p}\|=\|\check{x}_{0}\|=1$, the equation \eqref{eq:itercheck} induces that
\begin{eqnarray*}
\|\check{x}_{-p}\| &=& \|\check{x}_{0}\|\\
&\leq& \|\lim_{i\rightarrow \infty}\frac{\|x_{j_i,k_{j_i}}\|+1}{\|x_{j_i,k_{j_i}-p}\|+1}\check{x}_{0}\|\\
&=& \|(A+BQ)^p\check{x}_{-p}\|.
\end{eqnarray*}

The inequality above contradicts with the assumption in \eqref{eq:unstableass}, thus $A+BQ$ is unstable on $V^*$ and one of its unstable eigenvector $v\in V^*$.

Since $v\in V^*\subseteq \text{span}\begin{bmatrix}B&AB&\ldots&A^{n-1}B\end{bmatrix}$, based on the Lemma~\ref{lem:invariant}, the conditions for the {\em unstable reachable zero-dynamic} of \eqref{eq:standardQ} in Definition~\ref{defn:zero} are satisfied and the proof is finished.

\section{Proof of Lemma~\ref{lem:boundlem}}\label{app:bound}
%
Since the system in \eqref{eq:ss1}-\eqref{eq:ss2} is invulnerable, there exists $M>1$ such that\footnote{Suppose that there exists a sequence of attack such that $\left\Vert \Delta z\right\Vert_{\infty,2}=0$ and $0<\left\Vert \Delta e\right\Vert_{\infty,2}<\infty$, based on the linearity, there must exists another sequence of attack such that $\left\Vert \Delta z\right\Vert_{\infty,2}=0$ and $\left\Vert \Delta e\right\Vert_{\infty,2}=\infty$. It contradicts with invulnerability.}
\begin{equation}
\sup_{\left(y_t^a,u_t^a\right):0\leq \left\Vert \Delta z\right\Vert_{\infty,2}\leq \delta}f(\limsup_{t\rightarrow \infty}\Vert \Delta e_t\Vert_2,\limsup_{t\rightarrow \infty}\Vert \Delta z_t\Vert_2)\leq M,\label{eq:bound}
\end{equation}
where
\begin{eqnarray*}
f(a,b)=\begin{cases}
\frac{a}{b}, & \text{if}~b>0;\\
1, & \text{if}~a=0,b=0;\\
\infty, & \text{if}~a>0,b=0.
\end{cases}
\end{eqnarray*}

Firstly, we will prove
\begin{eqnarray}
\sup_{\left|z\right|=1}\sup_{0\leq \left\Vert S(z)\mu\right\Vert_{2}\leq \delta}f(\Vert T(z)\mu\Vert_2, \Vert S(z)\mu\Vert_2)\leq M\label{eq:cond2}
\end{eqnarray}
by contradiction. Suppose that there exists $\omega\in[-\pi,\pi)$ and $\mu$ such that
\begin{equation}
\frac{\left\Vert T\left(e^{j\omega}\right)\mu\right\Vert _{2}}{\left\Vert S\left(e^{j\omega}\right)\mu\right\Vert _{2}}>M.\label{eq:bigger}
\end{equation}

Let the attack input
\[
\zeta_{t}=e^{j\omega t}\mu~\text{for all}~t\in \mathbb{N}.
\]

It then follows from~\eqref{eq:zT} and \eqref{eq:zz} that
\begin{align*}
\limsup_{t\rightarrow \infty}\Vert \Delta e_{t}\Vert_2 & = \Vert T\left(e^{j\omega}\right)\mu\Vert_2,\\
\limsup_{t\rightarrow \infty}\Vert \Delta z_{t}\Vert_2 & = \Vert S\left(e^{j\omega}\right)\mu\Vert_2.
\end{align*}

Since the magnitude of $\mu$ will not change \eqref{eq:bigger}, thus we let $\|\mu\|_2$ small enough to make $\left\Vert S\left(e^{j\omega}\right)\mu\right\Vert_{2}\leq \left\Vert \Delta z\right\Vert _{\infty,2}\leq \delta$.

Hence, from~\eqref{eq:bigger},
\[
\frac{\limsup_{t\rightarrow \infty}\left\Vert \Delta e_{t}\right\Vert_{2}}{\limsup_{t\rightarrow \infty}\left\Vert \Delta z_{t}\right\Vert_{2}}=\frac{\left\Vert T\left(e^{j\omega}\right)\mu\right\Vert_{2}}{\left\Vert S\left(e^{j\omega}\right)\mu\right\Vert_{2}}>M,
\]
and it contradicts with \eqref{eq:bound}. Since the system is by assumption invulnerable, we must have \eqref{eq:cond2} and the result in \eqref{eq:cond} is thus proved.

Then, we will separate the proof for \eqref{eq:R1} into two steps:

Step 1) Let $S^{\ast}(z)$ be the conjugate transpose of $S(z)$ and
$A(z)=S(z)S^{\ast}(z)$. Let $\hat{m}\in\mathbb{N}$ be the maximum
number of linearly independent vectors $w_{i}(z)$, $i=1,\cdots,\hat{m}$,
satisfying
\[
A(z)w_{i}(z)=0\text{ for all }~|z|=1.
\]

Since the entries of $A(z)$ are rational functions, by solving the
above, it is easy to see that the entries of $w_{i}(z)$ can be rational
functions. Let $\tilde{m}=m-\hat{m}$ and $v_{i}(z)$, $i=1,\cdots,\tilde{m}$,
be a base for $\mathrm{span}^{\perp}\left\{ w_{i}(z),i=1,\cdots,\hat{m}\right\} $.
It is also easy to make that the entries of $v_{i}(z)$ are rational
functions. Let $V(z)=\left[v_{1},\cdots,v_{\tilde{m}}\right]$, $W(z)=\left[w_{1},\cdots,w_{\hat{m}}\right]$
and $U(z)=\left[V,W\right]$. We then have
\[
A(z)=U(z)\left[\begin{array}{cc}
\tilde{A}(z) & 0\\
0 & 0
\end{array}\right]U^{\ast}(z)
\]
with $\det\tilde{A}(z)\neq0$ for at least one $|z|=1$. Since
$\tilde{A}(z)=V^{\ast}(z)A(z)V(z)$ has rational entries, $\det\tilde{A}(z)$
is a rational function. Therefore, $\det\tilde{A}(z)\neq 0$ for almost
all $|z|=1$. It then follows that
\begin{align*}
S(z)S^{\dagger}(z) & =S(z)S^{\ast}(z)\left(S(z)S^{\ast}(z)\right)^{\dagger}\\
 & =A(z)A^{\dagger}(z)\\
 & =U(z)\left[\begin{array}{cc}
I_{\tilde{m}} & 0\\
0 & 0
\end{array}\right]U^{\ast}(z),
\end{align*}
where $S^{\dagger}(z)$ denotes the Moore-Penrose pseudoinverse $S(z)$
and $I_{\tilde{m}}$ denotes the identity matrix of dimension $\tilde{m}$.
Hence, the inverse $z$-transform $SS^{\dagger}$ of $S(z)S^{\dagger}(z)$ satisfies
\begin{equation}
\left\Vert SS^{\dagger}\right\Vert _{1,\mathrm{sp}}<\infty.\label{eq:SSpinv}
\end{equation}

Step 2) Our next step is to show that $\left\Vert R\right\Vert _{1,\mathrm{sp}}$ is finite. We do so by contradiction. Suppose that $\left\Vert R\right\Vert _{1,\mathrm{sp}}=\infty$.
For each $t\in\mathbb{Z}$, let $\eta_{t}$ be a vector satisfying
$\left\Vert \eta_{t}\right\Vert _{2}=1$ and
\begin{equation}
\left\Vert R_{t}\eta_{t}\right\Vert_{2}=\left\Vert R_{t}\right\Vert _{\mathrm{sp}}.\label{eq:achieve-norm}
\end{equation}

We have
\begin{eqnarray*}
\infty &=&\sum_{t\in\mathbb{Z}}\left\Vert R_{t}\right\Vert _{\mathrm{sp}}=\sum_{t\in\mathbb{Z}}\left\Vert R_{t}\eta_{t}\right\Vert _{2}\\
 &\leq&\sum_{t\in\mathbb{Z}}\left\Vert R_{t}\eta_{t}\right\Vert _{1}=\sum_{d=1}^{m}\sum_{t\in\mathbb{Z}}\left|\left(R_{t}\eta_{t}\right)_{d}\right|,
\end{eqnarray*}
where $\left(R_{t}\eta_{t}\right)_{d}$ is the $d$-th element of $R_{t}\eta_{t}$.

Hence, there exists $1\leq d\leq m$ such that
\[
\sum_{t\in\mathbb{Z}}\left|\left(R_{t}\eta_{t}\right)_{d}\right|=\infty.
\]

Then, there exists a sequence $\sigma_{t}\in\{-1,1\}$, $t\in\mathbb{Z}$,
satisfying
\begin{equation}
\sum_{t\in\mathbb{Z}}\sigma_{t}\left(R_{t}\eta_{t}\right)_{d}=\infty.\label{eq:blow}
\end{equation}

Given any $T$, let $\xi_{t}=\sigma_{T-t}\eta_{T-t}$ for all $t\in\mathbb{Z}$ and $S^{\dagger}:L_{2}^{m}\left(\mathbb{Z}\right)\rightarrow L_{2}^{m_a+p_a}\left(\mathbb{Z}\right)$
denote the Moore-Penrose pseudoinverse of $S$. Put
\begin{align*}
\zeta & =S^{\dagger}\xi,
\end{align*}
then
\begin{align*}
\Delta e & =T\zeta=R\xi,\\
\Delta z & =S\zeta=SS^{\dagger}\xi.
\end{align*}

Using~(\ref{eq:blow}) we get
\begin{eqnarray}
\lim_{T\rightarrow\infty}\left\Vert \Delta e_{T}\right\Vert_{2} &=&\lim_{T\rightarrow\infty}\left\Vert \sum_{t=-T}^{T}R_{t}\xi_{T-t}\right\Vert _{2}\nonumber\\
&=&\lim_{T\rightarrow\infty}\left\Vert \sum_{t=-T}^{T}\sigma_{t}R_{t}\eta_{t}\right\Vert _{2}=\infty.\label{eq:Dx}
\end{eqnarray}

Also, using~$(\ref{eq:SSpinv}),$ for all $t\in\mathbb{Z}$,
\begin{eqnarray}
\left\Vert \Delta z_{t}\right\Vert _{2} &=&\left\Vert \left(SS^{\dagger}\xi\right)_{t}\right\Vert _{2}\leq \sum_{s\in\mathbb{Z}}\left\Vert \left(SS^{\dagger}\right)_{s}\right\Vert _{\mathrm{sp}}\left\Vert \xi_{t-s}\right\Vert _{2}\nonumber \\
&\leq& \left\Vert SS^{\dagger}\right\Vert _{1,\mathrm{sp}}\left\Vert \xi\right\Vert _{\infty,2}<\infty.\label{eq:Dy}
\end{eqnarray}

From~(\ref{eq:Dx}) and~(\ref{eq:Dy}), the system is vulnerable.
Since by assumption the system is invulnerable, we must then have
$\left\Vert R\right\Vert _{1,\mathrm{sp}}<\infty$.

\bibliographystyle{unsrt}
\bibliography{refs}

\begin{thebibliography}{10}

\bibitem{Farwell2011}
J.~Farwell and R.~Rohozinski.
\newblock Stuxnet and the future of cyber war.
\newblock {\em Survival}, 53(1):23--40, 2011.

\bibitem{Slay2007}
J.~Slay and M.~Miller.
\newblock Lessons learned from the maroochy water breach.
\newblock {\em in Proceeding of Critical Infrastructure Protection},
  253:73--82, 2007.

\bibitem{Conti2010}
J.~Conti.
\newblock The day the samba stopped.
\newblock {\em Engineering \& Technology}, 5(6):46--47, 2010.

\bibitem{Kuvshinkova2003}
S.~Kuvshinkova.
\newblock Sql slammer worm lessons learned for consideration by the electricity
  sector.
\newblock {\em North American Electric Reliability Council}, 1(2):5, 2003.

\bibitem{Richards2008}
G.~Richards.
\newblock Hackers vs slackers.
\newblock {\em Engineering \& Technology}, 3(19):40--43, 2008.

\bibitem{Sastry2008chanllenge}
A.~Cardenas, S.~Amin, and S.~Sastry.
\newblock Research challenges for the security of control systems.
\newblock In {\em HotSec}, 2008.

\bibitem{Huber2011}
P.~Huber.
\newblock {\em Robust statistics}.
\newblock Springer Berlin Heidelberg, 2011.

\bibitem{Zhou1996}
K.~Zhou, J.~Doyle, and K.Glover.
\newblock {\em Robust and optimal control}.
\newblock New Jersey: Prentice hall, 1996.

\bibitem{Willsky1976}
A.~Willsky.
\newblock A survey of design methods for failure detection in dynamic systems.
\newblock {\em Automatica}, 12(6):601--611, 1976.

\bibitem{Massou1989FDI}
M.~Massoumnia, G.~Verghese, and A.~Willsky.
\newblock Failure detection and identification.
\newblock {\em IEEE Transactions on Automatic Control}, 34(3):316--321, 1989.

\bibitem{Hwang2010FDI}
I.~Hwang, S.~Kim, Y.~Kim, and C.~Seah.
\newblock A survey of fault detection, isolation, and reconfiguration methods.
\newblock {\em IEEE Transactions on Control System Technology}, 18(3):636--653,
  2010.

\bibitem{Liu2011attack}
Y.~Liu, P.~Ning, and M.~Reiter.
\newblock False data injection attacks against state estimation in electric
  power grids.
\newblock {\em ACM Transactions on Information and System Security},
  14(1):1--16, 2011.

\bibitem{Pasqualetti2010}
F.~Pasqualetti, A.~Bicchi, and F.~Bullo.
\newblock Consensus computation in unreliable networks: A system theoretic
  approach.
\newblock {\em IEEE Transactions on Automatic Control}, 57(1):90--104, 2010.

\bibitem{Pasqualetti2013}
F.~Pasqualetti, F.~Dorfler, and F.~Bullo.
\newblock Attack detection and identification in cyber-physical systems.
\newblock {\em IEEE Transactions on Automatic Control}, 58(11):2715--2729,
  2013.

\bibitem{Sundaram2010}
S.~Sundaram, M.~Pajic, C.~Hadjicostis, R.~Mangharam, and G.~Pappas.
\newblock The wireless control network: Monitoring for malicious behavior.
\newblock {\em in Proceedings of IEEE Conference on Decision and Control},
  pages 5979--5984, 2010.

\bibitem{Fawzi2012}
H.~Fawzi, P.~Tabuada, and S.~Diggavi.
\newblock Security for control systems under sensor and actuator attacks.
\newblock {\em in Proceedings of IEEE Conference on Decision and Control},
  pages 3412--3417, 2012.

\bibitem{Wagner2004}
D.~Wagner.
\newblock Resilient aggregation in sensor networks.
\newblock {\em in ACM Workshop on Security of Ad Hoc and Sensor Networks},
  pages 78--87, 2004.

\bibitem{Liu2011false}
Y.~Liu, P.~Ning, and M.~Reiter.
\newblock False data injection attacks against state estimation in electric
  power grids.
\newblock {\em in Proceedings of the 16th ACM Conference on Computer and
  Communications Security}, 14(1):13, 2011.

\bibitem{Mo2010false}
Y.~Mo, E.~Garone, A.~Casavola, and B~Sinopoli.
\newblock False data injection attacks against state estimation in wireless
  sensor networks.
\newblock In {\em 49th IEEE Conference on Decision and Control}, pages
  5967--5972, 2010.

\bibitem{mo2016performance}
Y.~Mo and B.~Sinopoli.
\newblock On the performance degradation of cyber-physical systems under
  stealthy integrity attacks.
\newblock {\em IEEE Transactions on Automatic Control}, 61(9):2618--2624, 2016.

\bibitem{mo2015secure}
Y.~Mo and B.~Sinopoli.
\newblock Secure estimation in the presence of integrity attacks.
\newblock {\em IEEE Transactions on Automatic Control}, 60(4):1145--1151, 2015.

\bibitem{Jin2017SenAcu}
X.~Jin, W.~Haddad, and T.~Yucelen.
\newblock An adaptive control architecture for mitigating sensor and actuator
  attacks in cyber-physical systems.
\newblock {\em IEEE Transactions on Automatic Control}, 62(11):6058--6064,
  2017.

\bibitem{Jin2018Distur}
X.~Jin, W.~Haddad, and T.~Hayakawa.
\newblock An adaptive control architecture for cyber-physical system security
  in the face of sensor and actuator attacks and exogenous stochastic
  disturbances.
\newblock {\em Cyber-Physical Systems}, 4(1):39--56, 2018.

\bibitem{Jin2018multi}
X.~Jin and W.~Haddad.
\newblock An adaptive control architecture for leader-follower multiagent
  systems with stochastic disturbances and sensor and actuator attacks.
\newblock {\em International Journal of Control}, pages 1--10, 2018.

\bibitem{ZhangH2015}
H.~Zhang, P.~Cheng, L.~Shi, and J.~Chen.
\newblock Optimal denial-of-service attack scheduling with energy constraint.
\newblock {\em IEEE Transactions on Automatic Control}, 60(11):3023--3028,
  2015.

\bibitem{ZhaoC2017}
C.~Zhao, J.~He, P.~Cheng, and J.~Chen.
\newblock Analysis of consensus-based distributed economic dispatch under
  stealthy attacks.
\newblock {\em IEEE Transactions on Industrial Electronics}, 64(6):5107--5117,
  2017.

\bibitem{Kung2017}
E.~Kung, S.~Dey, and L.~Shi.
\newblock The performance and limitations of epsilon-stealthy attacks on higher
  order systems.
\newblock {\em IEEE Transactions on Automatic Control}, 62(2):941--947, 2017.

\bibitem{DingK2017}
K.~Ding, Y.~Li, D.~Quevedo, S.~Dey, and L.~Shi.
\newblock A multi-channel transmission schedule for remote state estimation
  under dos attacks.
\newblock {\em Automatica}, 78:194--201, 2017.

\bibitem{bai2015cdc}
G.~Bai, F.~Pasqualetti, and V.~Gupta.
\newblock Security in stochastic control systems: Fundamental limitations and
  performance bounds.
\newblock In {\em Proceedings of IEEE Conference on Decision and Control},
  pages 195--200, 2015.

\bibitem{wonham1985}
W.~Wonham.
\newblock {\em Linear Multivariable Control: A Geometric Approach}.
\newblock Springer-Verlag, New York, 1985.

\bibitem{marro2010}
G.~Marro, F.~Morbidi, L.~Ntogramatzidis, and D.~Prattichizzo.
\newblock Geometric control theory for linear systems: a tutorial.
\newblock In {\em Proceedings of the 19th International Symposium on
  Mathematical Theory of Networks and Systems}, pages 1579--1590, 2010.

\bibitem{Sain1969}
M.~Sain and J.~Massey.
\newblock Invertibility of linear time-invariant dynamical systems.
\newblock {\em IEEE Transactions on Automatic Control}, 14(2):141--149, 1969.

\bibitem{chen1998linear}
C.~Chen.
\newblock {\em Linear system theory and design}.
\newblock Oxford University Press, Inc., 1998.

\bibitem{Dion2003}
J.~Dion, C.~Commault, and J.~Woude.
\newblock Generic properties and control of linear structured systems: a
  survey.
\newblock {\em Automatica}, 39:1125--1144, 2003.

\bibitem{Anderson1975}
B.~Anderson.
\newblock Output-nulling invariant and controllability subspace.
\newblock {\em IFAC Proceedings Volumes}, 8(1):337--345, 1975.

\bibitem{downs1993plant}
J.~Downs and E.~Vogel.
\newblock A plant-wide industrial process control problem.
\newblock {\em Computers and chemical engineering}, 17(3):245--255, 1993.

\bibitem{ricker1993model}
L.~Ricker.
\newblock Model predictive control of a continuous, nonlinear, two-phase
  reactor.
\newblock {\em Journal of Process Control}, 3(2):109--123, 1993.

\bibitem{Bartle2000real}
R.~Bartle and D.~Sherbert.
\newblock {\em Introduction to Real Analysis(3rd ed.)}.
\newblock New York: J. Wiley, 2000.

\end{thebibliography}
\begin{IEEEbiography}[{\includegraphics[width=1in,height=1.25in,clip,keepaspectratio]{SUI_TIANJU.pdf}}]{Tianju Sui}
received B.S. and Ph.D. degrees from Zhejiang University, Hangzhou, China. in 2012 and 2017, respectively. He is currently serving as an Associate Professor in Dalian University of Technology. His main research area includes networked estimation, distributed estimation and the security of Cyber-Physical systems.
\end{IEEEbiography}
\begin{IEEEbiography}[{\includegraphics[width=1in,height=1.25in,clip,keepaspectratio]{yilin.pdf}}]{Yilin Mo}
is an Associate Professor in the Department of Automation at Tsinghua University. He received his Ph.D. In Electrical and Computer Engineering from Carnegie Mellon University in 2012 and his Bachelor of Engineering degree from Department of Automation, Tsinghua University in 2007. Prior to his current position, he was a postdoctoral scholar at Carnegie Mellon University in 2013 and California Institute of Technology from 2013 to 2015. From 2015 to 2018, he held an assistant professor position in the School of Electrical and Electronic Engineering at Nanyang Technological University. His research interests include secure control systems and networked control systems, with applications in sensor networks and power grids.
\end{IEEEbiography}
\begin{IEEEbiography}[{\includegraphics[width=1in,height=1.25in,clip,keepaspectratio]{Damian.pdf}}]{Damian Marelli}
received his Bachelors Degree in Electronics Engineering from the Universidad Nacional de Rosario, Argentina in 1995. He also received his Bachelor (Honous) degree in Mathematics and Ph.D. degree in Electrical Engineering, both from the University of Newcastle, Australia in 2003. From 2004 to 2005 he was postdoc at the Laboratoire d'Analyse Topologie et Probabilites, CNRS/Universite de Provence, France. From 2005 to 2015 he was Research Academic at the Centre for Complex Dynamic Systems and Control, the University of Newcastle, Australia. In 2007 he received a Marie Curie Postdoctoral Fellowship, hosted at the University of Vienna, and in 2011 he received a Lise Meitner Senior Fellowship, hosted at the Austrian Academy of Sciences. Since 2016, he is Professor at the School of Automation, Guangdong University of Technology, China and Independent Researcher appointment at the French-Argentinean International Center for Information and Systems Sciences, National Scientific and Technical Research Council, Argentina.  His main research interests include system theory, statistical signal processing and distributed processing.
\end{IEEEbiography}
\begin{IEEEbiography}[{\includegraphics[width=1in,height=1.25in,clip,keepaspectratio]{XiMing.pdf}}]{Xi-Ming Sun}
received the Ph.D. degree in Control Theory and Control Engineering from the Northeastern University, China, in 2006. From August 2006 to December 2008, he worked as a Research Fellow in the Faculty of Advanced Technology, University of Glamorgan, UK. He then visited the School of Electrical and Electronic Engineering, Melbourne University, Australia in 2009, and Polytechnic Institute of New York University in 2011, respectively. He is IEEE Senior Member. He serves as   Associate Editor in IEEE Transactions on Cybernetics. He is currently a Professor in the School of Control Science and Engineering, Dalian University of Technology, China. He was awarded the Most Cited Article 2006-2010 from the journal of Automatica in 2011. His research interests include hybird systems, networked control systems, and nonlinear systems.
\end{IEEEbiography}
\begin{IEEEbiography}[{\includegraphics[width=1in,height=1.25in,clip,keepaspectratio]{Minyue.pdf}}]{Minyue Fu}
(F' 03) received the B.Sc. degree in electrical engineering from the University of Science and Technology of China, Hefei, China, in 1982, and the M.S. and Ph.D. degrees in electrical engineering from the University of Wisconsin-Madison, Madison, WI, USA. in 1983 and 1987, respectively.

From 1987 to 1989, he was an Assistant Professor in the Department of Electrical and Computer Engineering, Wayne State University, USA. He joined the Department of Electrical and Computer Engineering at the University of Newcastle, Australia, in 1989. Currently, he is a Chair Professor of Electrical Engineering. He has been a Visiting Associate Professor at the University of Iowa, USA, Nanyang Technological University, Singapore and Tokyo University, Tokyo, Japan. He has held a ChangJiang Visiting Professorship at Shandong University, Jinan, China, a Extinguished Professorship at Zhejiang University and Guangdong University of Technology, China. He has been an Associate Editor for the IEEE Transactions on Automatic Control, Automatica, IEEE Transactions on Signal Processing, and the Journal of Optimization and Engineering. He is a Fellow of IEEE, Fellow of Engineers of Australia and Fellow of Chinese Association of Automation. His current research areas include networked control systems, smart electricity networks, and super-precision positioning control systems.
\end{IEEEbiography}

\end{document}